\documentclass[aps,prx,twocolumn,english,showpacs,superscriptaddress,amssymb,amsfonts,nofootinbib]{revtex4-2}
\usepackage{graphicx}
\usepackage{hyperref}
\usepackage{amsmath}
\usepackage{tikz}
\usetikzlibrary{positioning}
\usepackage{soul,xcolor}
\usepackage[export]{adjustbox}
\usepackage{placeins}

\newcommand{\Rom}[1]{\uppercase\expandafter{\romannumeral #1\relax}}
\usepackage{amsthm}

\newtheorem*{assumption*}{\assumptionnumber}
\providecommand{\assumptionnumber}{}
\makeatletter

\makeatother

\theoremstyle{definition}
\newtheorem{definition}{Definition}[section]

\newtheorem*{remark}{Remark}

\theoremstyle{theorem}
\newtheorem{theorem}{Theorem}[section] 

\theoremstyle{corollary}

\theoremstyle{lemma} 
\newtheorem{lemma}[theorem]{Lemma}

\theoremstyle{Proposition}

\theoremstyle{definition}
\newtheorem{exmp}{Example}[section]

\usepackage{amssymb}

\newcommand{\calH}{{\mathcal H}}

\newcommand{\Tr}{{\rm Tr}}

\newcommand{\Rk}{{\rm Rank}_{[2]}}

\definecolor{XDblue}{RGB}{25,25,125}
\definecolor{BSorange}{RGB}{140,50,0}
\definecolor{YMLgreen}{RGB}{50,100,50}

\begin{document}
	\title{Entanglement negativity at the critical point of measurement-driven transition}
	\author{Bowen Shi}
	\affiliation{Department of Physics, The Ohio State University, Columbus, OH 43210, USA}
	\affiliation{Department of Physics, University of California at San Diego, La Jolla, CA 92093, USA}
	\author{Xin Dai}
	\affiliation{Department of Physics, The Ohio State University, Columbus, OH 43210, USA}
	\affiliation{Computational Science Initiative, Brookhaven National Laboratory, Upton, NY 11973, USA}
	\author{Yuan-Ming Lu}
	\affiliation{Department of Physics, The Ohio State University, Columbus, OH 43210, USA}
	
	\date{\today}
	
	\begin{abstract}
	We study the entanglement behavior of a random unitary circuit punctuated by projective measurements at the measurement-driven phase transition in one spatial dimension. We numerically study the logarithmic entanglement negativity of two disjoint intervals and find that it scales as a power of the cross-ratio. We investigate two systems: (1) Clifford circuits with projective measurements, and (2) Haar random local unitary circuit with projective measurements.
	Remarkably, we identify a power-law behavior of entanglement negativity at the critical point. Previous results of entanglement entropy and mutual information point to an emergent conformal invariance of the measurement-driven transition. Our result suggests that the critical behavior of the measurement-driven transition is distinct from the ground state behavior of any \emph{unitary} conformal field theory.
    \end{abstract}

	\pacs{}
	\maketitle

	\section{Introduction and main results}

Entanglement is a central concept in quantum physics, and it enables some of the most important applications of quantum mechanics such as quantum teleportation and quantum computation. For bipartite pure states, the entanglement entropy measures the entanglement of subsystem $A$ with its complement. In recent years, the entanglement entropy has proven to be an insightful tool to characterize quantum many-body phases\cite{2006PhRvL..96k0404K,2006PhRvL..96k0405L,2006PhRvL..96r1602R,2008RvMP...80..517A,zeng2019quantum}.

The entanglement entropy, however, is not a complete characterization of quantum entanglement\cite{2006PhDT........59E,2009RvMP...81..865H,2017arXiv170102187D}. While entanglement entropy and associated mutual information provide a good characterization of entanglement in a pure quantum state, they measure the total amount of correlation rather than the quantum entanglement between two regions in a mixed quantum state\cite{2005PhRvA..72c2317G,2008PhRvL.100g0502W}. How to quantify the quantum entanglement for bipartite mixed states (or equivalently tripartite pure states)? One such calculable measure is known as the entanglement negativity\cite{2002PhRvA..65c2314V,2006PhDT........59E}. Recently entanglement negativity as well as other mixed states entanglement measures have been applied to enrich our understanding of various quantum many-body systems\cite{2012PhRvL.109m0502C,2013JSMTE..02..008C,marvian2017symmetry,2013PhRvA..88d2318L,2016PhRvB..94g5124W,2017PhRvB..95p5101S,2019arXiv191204293L,2021PhRvL.126l0501Z}.

Recently, there are significant interests in the entanglement properties of a 1D dynamical quantum system, driven by
random local unitary circuits and random local measurements with a certain rate $0<p<1$ \cite{2018PhRvB..98t5136L,2019PhRvX...9c1009S,2019PhRvB..99v4307C,2019PhRvB.100m4306L,2020PhRvL.125c0505C,2020PhRvB.101j4302J,2020PhRvB.101j4301B,2019PhRvB.100f4204S,2019arXiv190505195G,2019arXiv191000020G,2020PhRvB.101f0301Z,2020PhRvB.101w5104Z,2020arXiv200212385F,2020arXiv200312721L}. In such random unitary circuits with measurements, the unitary evolution tends to increase the entanglement entropy~\cite{2017PhRvX...7c1016N} whereas the local measurements tend to disentangle the system. This leads to a volume-law entangled phase at a low rate of measurement $p<p_c$ and an area-law entangled phase at a high rate of measurement $p>p_c$~\cite{2018PhRvB..98t5136L,2019PhRvX...9c1009S,2019PhRvB..99v4307C}. After that, a number of works have studied various properties of the two phases and the critical point that separates them~\cite{2019PhRvB.100m4306L,2020PhRvL.125c0505C,2020PhRvB.101j4302J,2020PhRvB.101j4301B,2019PhRvB.100f4204S,2019arXiv190505195G,2019arXiv191000020G,2020PhRvB.101f0301Z,2020PhRvB.101w5104Z,2020arXiv200212385F,2020arXiv200312721L}. Most of the previous works focused on the entanglement entropy and mutual information between different regions in the system. 

In this work, our goal is to characterize the quantum entanglement in the 1D random unitary circuits with measurements by numerically studying the entanglement negativity between two disjoint intervals. In particular, we focus on the critical behavior of the entanglement negativity at the measurement-driven phase transition at $p=p_c$. We investigate both the Clifford random unitary circuit and the Haar random unitary circuit, punctuated by single-site projective measurements at a rate of $p$. We numerically identify a simple scaling law of the entanglement negativity right at the critical point. For two small intervals separated by a distance $r$, the negativity scales as $r^{-2 \Lambda}$, where $\Lambda \approx 3$ for both Clifford and Haar circuit models (within the error bar). Alternatively, the scaling law can be written as $\eta^{\Lambda}$ for $\eta \ll 1$, where $\eta$ is the cross-ratio defined by (\ref{eq:cross_ratio}) for the pair of intervals.

It is insightful to compare the power-law scaling of negativity studied in this work to previously obtained power-law scaling of mutual information at the same measurement-driven phase transition~\cite{2019PhRvB.100m4306L}. The mutual information (\ref{mutual information}) at the critical point has been shown to scale in a power-law fashion $I_{A,B}\propto\eta^{\Delta}$ with the cross-ratio $\eta$ for $\eta\ll1$, where $\Delta \approx 2$. In other words, the mutual information for two small disjoint intervals decays as $r^{-2 \Delta}$ with respect to the distance $r$ between the two intervals; see also Ref.~\cite{2019PhRvX...9c1009S}. $\Delta<\Lambda$ indicates that the entanglement negativity decays faster than the mutual information as we increase the distance between the two intervals. For a mixed quantum state, the bipartite mutual information $I_{A,B}$ measures the total correlation between the two regions\cite{2005PhRvA..72c2317G,2008PhRvL.100g0502W}, whereas the negativity as an entanglement monotone measures the quantum entanglement between them\cite{1998PhRvA..58..883Z,2002PhRvA..65c2314V,2006PhDT........59E}. Therefore our results indicate as the distance between two regions increases, that the quantum entanglement decays faster than the total correlation between them.

The most important implication of the power-law negativity scaling is on the nature of the measurement-driven phase transition. Many previous results, such as the scalings of entanglement entropy and mutual information, point to an emergent conformal invariance at this dynamical critical point\cite{2019PhRvB.100m4306L,2020arXiv200312721L,2020PhRvB.101j4302J}. While the precise nature of this emergent conformal symmetry is under debate, it remained possible that the entanglement property of the late time quantum states being similar to that in the continuous phase transitions in equilibrium, at least phenomenologically. In other words, the long-wavelength physics of the measurement-driven entanglement transition may be captured by that of the ground state of some conformal field theory (CFT). While the power-law mutual information is consistent with a unitary CFT\cite{2009JSMTE..11..001C,2011JSMTE..01..021C,2011JSMTE..01..021C,2010PhRvD..82l6010H}, it is known that the entanglement negativity of two disjoint intervals must decay faster than any power law w.r.t. a small cross-ratio $\eta\ll1$ in a unitary CFT ground state\cite{2012PhRvL.109m0502C,2013JSMTE..02..008C,2015JSMTE..06..021D}. Therefore our observation on a power-law scaling of entanglement negativity shows strong evidence that the critical behavior at the measurement-driven entanglement transition is distinct from the ground state behavior of any unitary CFT. This puts constraints on the nature that conformal symmetry enters this story.
This analysis can also provide a constraint on the possible non-unitary CFT description of the critical point\cite{2020PhRvB.101j4302J,2020arXiv200312721L}.

This paper is organized as follows.
In Sec. \ref{sec:setup}, we setup the Clifford and Haar random circuit models with projective measurements and introduce the correlation and entanglement measures (mutual information and entanglement negativity) used in this work. In Sec. \ref{sec:results}, we present our numerical results on the scaling behavior of mutual information and entanglement negativity at the measurement-driven phase transition and discuss their implications. In Sec. \ref{sec:summary}, we summarize the main conclusions of our work and outlook for the future.

\section{Setup and background}\label{sec:setup}
	
	\subsection{The models}

	We consider a 1D system consisting of $L$ qubits arranged on a ring with a periodic boundary condition (PBC), as illustrated in Fig.~\ref{fig:lattice}. We assume $L$ is even throughout the paper. The PBC 
	makes it easy to study the dependence of entanglement on the cross-ratio. We study two types of models: (1) the random Clifford circuit with single-site projective measurements, (2) the Haar random unitary circuit model with single-site projective measurements. We fix the initial state to be $\vert \psi(0)\rangle = \otimes_{i=1}^L \vert 0\rangle_i$, i.e. the product state with all spin up in the Pauli $Z$-basis. The details of the two models are described below.

	\begin{figure}[h]
		\centering
		\begin{tikzpicture}
		\begin{scope}[xshift=2.6 cm, yshift=5.5 cm, scale=0.85]

		\draw[line width=1pt] (0,0) circle (1.6);

		\fill[green!20!white] (0:1.6) circle (0.1);
		\draw[line width=1pt] (0:1.6) circle (0.1);
		
		\fill[green!20!white] (30:1.6) circle (0.1);
		\draw[line width=1pt] (30:1.6) circle (0.1);
		
		\fill[green!20!white] (60:1.6) circle (0.1);
		\draw[line width=1pt] (60:1.6) circle (0.1);
		
		\fill[green!20!white] (90:1.6) circle (0.1);
		\draw[line width=1pt] (90:1.6) circle (0.1);
		
		\fill[green!20!white] (120:1.6) circle (0.1);
		\draw[line width=1pt] (120:1.6) circle (0.1);
		
		\fill[green!20!white] (150:1.6) circle (0.1);
		\draw[line width=1pt] (150:1.6) circle (0.1);
		
		\fill[green!20!white] (180:1.6) circle (0.1);
		\draw[line width=1pt] (180:1.6) circle (0.1);
		
		\fill[green!20!white] (210:1.6) circle (0.1);
		\draw[line width=1pt] (210:1.6) circle (0.1);
		
		\fill[green!20!white] (240:1.6) circle (0.1);
		\draw[line width=1pt] (240:1.6) circle (0.1);
		
		\fill[green!20!white] (270:1.6) circle (0.1);
		\draw[line width=1pt] (270:1.6) circle (0.1);
		
		\fill[green!20!white] (300:1.6) circle (0.1);
		\draw[line width=1pt] (300:1.6) circle (0.1);
		
		\fill[green!20!white] (330:1.6) circle (0.1);
		\draw[line width=1pt] (330:1.6) circle (0.1);
		\end{scope}
		\end{tikzpicture}
		\caption{The 1D system consists of $L$ qubits (sites) arranged on a circle with the periodic boundary condition.}
		\label{fig:lattice}
	\end{figure}
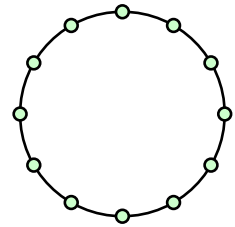

	\subsubsection{Random Clifford circuits with measurements}
	The dynamic of the system is given by random Clifford unitary circuits on pairs of nearby qubits and projective measurements of single-site Pauli $Z$ operators independently with probability $p$ for each site at any time step. See Fig.~\ref{fig:circuit} for an illustration of the bricklayer arrangement of the 2-qubit unitary operators and the position of measurements in discretized spacetime.

	The advantage of the Clifford circuit model is that it allows an efficient simulation on classical computers~\cite{1998quant.ph..7006G}. We have simulated the Clifford circuit model for system sizes up to size $L=768$. Similar to previous studies, our physical quantities are obtained by taking an ensemble average over late time (pure) quantum states. We shall refer to this average as \emph{late time average}.

	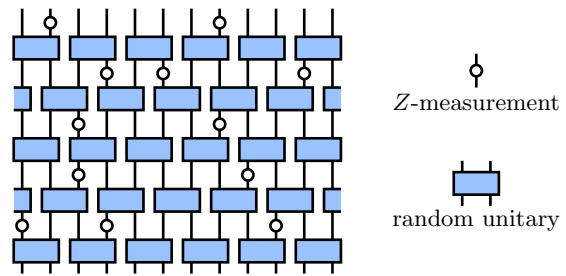
\begin{figure}[h]
		\centering
		\begin{tikzpicture}		
		\begin{scope}[xshift=0,scale=0.75]
		\draw[line width=1 pt] (-0.25, -0.4) -- (-0.25, 4.3);
		\draw[line width=1 pt] (-0.25+0.5, -0.4) -- (-0.25+0.5, 4.3);
		\draw[line width=1 pt] (-0.25+1, -0.4) -- (-0.25+1, 4.3);
		\draw[line width=1 pt] (-0.25+1.5, -0.4) -- (-0.25+1.5, 4.3);
		\draw[line width=1 pt] (-0.25+2, -0.4) -- (-0.25+2, 4.3);
		\draw[line width=1 pt] (-0.25+2.5, -0.4) -- (-0.25+2.5, 4.3);
		\draw[line width=1 pt] (-0.25+3, -0.4) -- (-0.25+3, 4.3);
		\draw[line width=1 pt] (-0.25+3.5, -0.4) -- (-0.25+3.5, 4.3);
		\draw[line width=1 pt] (-0.25+4, -0.4) -- (-0.25+4, 4.3);
		\draw[line width=1 pt] (-0.25+4.5, -0.4) -- (-0.25+4.5, 4.3);
		\draw[line width=1 pt] (-0.25+5, -0.4) -- (-0.25+5, 4.3);
		\draw[line width=1 pt] (-0.25+5.5, -0.4) -- (-0.25+5.5, 4.3);
		
		\fill[blue!60!cyan!40!white] (-0.4,-0.2) rectangle (0.4,0.2);
		\fill[blue!60!cyan!40!white] (-0.4,-0.2+1.8) rectangle (0.4,0.2+1.8);
		\fill[blue!60!cyan!40!white] (-0.4,-0.2+3.6) rectangle (0.4,0.2+3.6);
		
		\fill[blue!60!cyan!40!white] (-0.4+0.5,-0.2+0.9) rectangle (0.4+0.5,0.2+0.9);
		\fill[blue!60!cyan!40!white] (-0.4+0.5,-0.2+2.7) rectangle (0.4+0.5,0.2+2.7);
		
		\fill[blue!60!cyan!40!white] (-0.4+1,-0.2) rectangle (0.4+1,0.2);
		\fill[blue!60!cyan!40!white] (-0.4+1,-0.2+1.8) rectangle (0.4+1,0.2+1.8);
		\fill[blue!60!cyan!40!white] (-0.4+1,-0.2+3.6) rectangle (0.4+1,0.2+3.6);
		
		\fill[blue!60!cyan!40!white] (-0.4+1.5,-0.2+0.9) rectangle (0.4+1.5,0.2+0.9);
		\fill[blue!60!cyan!40!white] (-0.4+1.5,-0.2+2.7) rectangle (0.4+1.5,0.2+2.7);
		
		\fill[blue!60!cyan!40!white] (-0.4+2,-0.2) rectangle (0.4+2,0.2);
		\fill[blue!60!cyan!40!white] (-0.4+2,-0.2+1.8) rectangle (0.4+2,0.2+1.8);
		\fill[blue!60!cyan!40!white] (-0.4+2,-0.2+3.6) rectangle (0.4+2,0.2+3.6);
		
		\fill[blue!60!cyan!40!white] (-0.4+2.5,-0.2+0.9) rectangle (0.4+2.5,0.2+0.9);
		\fill[blue!60!cyan!40!white] (-0.4+2.5,-0.2+2.7) rectangle (0.4+2.5,0.2+2.7);
		
		\fill[blue!60!cyan!40!white] (-0.4+3,-0.2) rectangle (0.4+3,0.2);
		\fill[blue!60!cyan!40!white] (-0.4+3,-0.2+1.8) rectangle (0.4+3,0.2+1.8);
		\fill[blue!60!cyan!40!white] (-0.4+3,-0.2+3.6) rectangle (0.4+3,0.2+3.6);
		
		\fill[blue!60!cyan!40!white] (-0.4+3.5,-0.2+0.9) rectangle (0.4+3.5,0.2+0.9);
		\fill[blue!60!cyan!40!white] (-0.4+3.5,-0.2+2.7) rectangle (0.4+3.5,0.2+2.7);
		
		\fill[blue!60!cyan!40!white] (-0.4+4,-0.2) rectangle (0.4+4,0.2);
		\fill[blue!60!cyan!40!white] (-0.4+4,-0.2+1.8) rectangle (0.4+4,0.2+1.8);
		\fill[blue!60!cyan!40!white] (-0.4+4,-0.2+3.6) rectangle (0.4+4,0.2+3.6);	
		
		\fill[blue!60!cyan!40!white] (-0.4+4.5,-0.2+0.9) rectangle (0.4+4.5,0.2+0.9);
		\fill[blue!60!cyan!40!white] (-0.4+4.5,-0.2+2.7) rectangle (0.4+4.5,0.2+2.7);
		
		\fill[blue!60!cyan!40!white] (-0.4+5,-0.2) rectangle (0.4+5,0.2);
		\fill[blue!60!cyan!40!white] (-0.4+5,-0.2+1.8) rectangle (0.4+5,0.2+1.8);
		\fill[blue!60!cyan!40!white] (-0.4+5,-0.2+3.6) rectangle (0.4+5,0.2+3.6);	
		
		\draw[line width=1 pt] (-0.4,-0.2) rectangle (0.4,0.2);
		\draw[line width=1 pt] (-0.4,-0.2+1.8) rectangle (0.4,0.2+1.8);
		\draw[line width=1 pt] (-0.4,-0.2+3.6) rectangle (0.4,0.2+3.6);
		
		\draw[line width=1 pt] (-0.4+0.5,-0.2+0.9) rectangle (0.4+0.5,0.2+0.9);
		\draw[line width=1 pt] (-0.4+0.5,-0.2+2.7) rectangle (0.4+0.5,0.2+2.7);
		
		\draw[line width=1 pt] (-0.4+1,-0.2) rectangle (0.4+1,0.2);
		\draw[line width=1 pt] (-0.4+1,-0.2+1.8) rectangle (0.4+1,0.2+1.8);
		\draw[line width=1 pt] (-0.4+1,-0.2+3.6) rectangle (0.4+1,0.2+3.6);
		
		\draw[line width=1 pt] (-0.4+1.5,-0.2+0.9) rectangle (0.4+1.5,0.2+0.9);
		\draw[line width=1 pt] (-0.4+1.5,-0.2+2.7) rectangle (0.4+1.5,0.2+2.7);
		
		\draw[line width=1 pt] (-0.4+2,-0.2) rectangle (0.4+2,0.2);
		\draw[line width=1 pt] (-0.4+2,-0.2+1.8) rectangle (0.4+2,0.2+1.8);
		\draw[line width=1 pt] (-0.4+2,-0.2+3.6) rectangle (0.4+2,0.2+3.6);
		
		\draw[line width=1 pt] (-0.4+2.5,-0.2+0.9) rectangle (0.4+2.5,0.2+0.9);
		\draw[line width=1 pt] (-0.4+2.5,-0.2+2.7) rectangle (0.4+2.5,0.2+2.7);
		
		\draw[line width=1 pt] (-0.4+3,-0.2) rectangle (0.4+3,0.2);
		\draw[line width=1 pt] (-0.4+3,-0.2+1.8) rectangle (0.4+3,0.2+1.8);
		\draw[line width=1 pt] (-0.4+3,-0.2+3.6) rectangle (0.4+3,0.2+3.6);
		
		\draw[line width=1 pt] (-0.4+3.5,-0.2+0.9) rectangle (0.4+3.5,0.2+0.9);
		\draw[line width=1 pt] (-0.4+3.5,-0.2+2.7) rectangle (0.4+3.5,0.2+2.7);
		
		\draw[line width=1 pt] (-0.4+4,-0.2) rectangle (0.4+4,0.2);
        \draw[line width=1 pt] (-0.4+4,-0.2+1.8) rectangle (0.4+4,0.2+1.8);
        \draw[line width=1 pt] (-0.4+4,-0.2+3.6) rectangle (0.4+4,0.2+3.6);	

        \draw[line width=1 pt] (-0.4+4.5,-0.2+0.9) rectangle (0.4+4.5,0.2+0.9);
        \draw[line width=1 pt] (-0.4+4.5,-0.2+2.7) rectangle (0.4+4.5,0.2+2.7);

        \draw[line width=1 pt] (-0.4+5,-0.2) rectangle (0.4+5,0.2);
        \draw[line width=1 pt] (-0.4+5,-0.2+1.8) rectangle (0.4+5,0.2+1.8);
        \draw[line width=1 pt] (-0.4+5,-0.2+3.6) rectangle (0.4+5,0.2+3.6);	

        \fill[blue!60!cyan!40!white] (-0.4,-0.2+0.9) rectangle (-0.1,0.2+0.9);
        \draw[line width=1 pt] (-0.4,-0.2+0.9) -- (-0.1, -0.2+0.9) -- (-0.1, 0.2+0.9) -- (-0.4, 0.2+0.9);

        \fill[blue!60!cyan!40!white] (-0.4,-0.2+2.7) rectangle (-0.1,0.2+2.7);
        \draw[line width=1 pt] (-0.4,-0.2+2.7) -- (-0.1, -0.2+2.7) -- (-0.1, 0.2+2.7) -- (-0.4, 0.2+2.7);

        \fill[blue!60!cyan!40!white] (5.1,-0.2+0.9) rectangle (5.4,0.2+0.9);
        \draw[line width=1 pt] (5.4,-0.2+0.9) -- (5.1, -0.2+0.9) -- (5.1, 0.2+0.9) -- (5.4, 0.2+0.9);

        \fill[blue!60!cyan!40!white] (5.1,-0.2+2.7) rectangle (5.4,0.2+2.7);
        \draw[line width=1 pt] (5.4,-0.2+2.7) -- (5.1, -0.2+2.7) -- (5.1, 0.2+2.7) -- (5.4, 0.2+2.7);

        \fill[white] (-0.25,0.45) circle (0.1);
        \draw[line width=1 pt] (-0.25,0.45) circle (0.1);

        \fill[white] (-0.25+1.5,0.45) circle (0.1);
        \draw[line width=1 pt] (-0.25+1.5,0.45) circle (0.1);

        \fill[white] (-0.25+4.5,0.45) circle (0.1);
        \draw[line width=1 pt] (-0.25+4.5,0.45) circle (0.1);

        \fill[white] (-0.25+1,0.45+0.9) circle (0.1);
        \draw[line width=1 pt] (-0.25+1,0.45+0.9) circle (0.1);

        \fill[white] (-0.25+4,0.45+0.9) circle (0.1);
        \draw[line width=1 pt] (-0.25+4,0.45+0.9) circle (0.1);

        \fill[white] (-0.25+1,0.45+1.8) circle (0.1);
        \draw[line width=1 pt] (-0.25+1,0.45+1.8) circle (0.1);

        \fill[white] (-0.25+3.5,0.45+1.8) circle (0.1);
        \draw[line width=1 pt] (-0.25+3.5,0.45+1.8) circle (0.1);

        \fill[white] (-0.25+1.5,0.45+2.7) circle (0.1);
        \draw[line width=1 pt] (-0.25+1.5,0.45+2.7) circle (0.1);

        \fill[white] (-0.25+2.5,0.45+2.7) circle (0.1);
        \draw[line width=1 pt] (-0.25+2.5,0.45+2.7) circle (0.1);

        \fill[white] (-0.25+5,0.45+2.7) circle (0.1);
        \draw[line width=1 pt] (-0.25+5,0.45+2.7) circle (0.1);

        \fill[white] (-0.25+0.5,0.45+3.6) circle (0.1);
        \draw[line width=1 pt] (-0.25+0.5,0.45+3.6) circle (0.1);

        \fill[white] (-0.25+3.5,0.45+3.6) circle (0.1);
        \draw[line width=1 pt] (-0.25+3.5,0.45+3.6) circle (0.1);

        \begin{scope}[xshift=7.8 cm, yshift=1.2 cm]
        \draw[line width=1 pt] (-0.25, -0.4) -- (-0.25, 0.4);
        \draw[line width=1 pt] (0.25, -0.4) -- (0.25, 0.4);
        \fill[blue!60!cyan!40!white] (-0.4,-0.2) rectangle (0.4,0.2);
        \draw[line width=1 pt] (-0.4,-0.2) rectangle (0.4,0.2);
        \node[] (C) at (0,-0.65) {random unitary};

        \begin{scope}[yshift= 2 cm]
        \draw[line width=1 pt] (0,-0.3) -- (0, 0.3);
        \fill[white] (0,0) circle (0.1);
        \draw[line width=1 pt] (0,0) circle (0.1);
        \node[] (C) at (0,-0.55) {$Z$-measurement};
        \end{scope}
        \end{scope}
        \end{scope}
		\end{tikzpicture}
		\caption{A spacetime diagram for the 1D randum Clifford (Haar) circuit model with projective measurements. The blue blocks are 2-qubit random unitary operators. Each white dot is a single-qubit projective measurement in the $Z$-basis. This type of measurements occurs with a probability $p$ in every discrete time step.}
		\label{fig:circuit}
	\end{figure}

	\subsubsection{Haar random unitary circuit with measurements}

	We also consider a more random evolution on the same physical system, namely the Haar random circuit with measurements. Now each blue block in Fig.~\ref{fig:circuit} refers to a 2-qubit unitary in $U(4)$ group, selected randomly with respect to the Haar measure of the Lie group $U(4)$. The measurements are still in the $Z$-basis as before, and these measurements remain independent with a probability $p$. The Haar random circuit model is much more computationally expensive, and we have simulated this model up to size $L=20$.

\subsection{Entanglement and correlation measures}

In this section, we briefly introduce the measures of entanglement and correlation studied in this work. Among them, entanglement entropy is an entanglement measure of bipartite pure states, mutual information is a measure of the total correlation for a bipartite mixed state \cite{2005PhRvA..72c2317G,2008PhRvL.100g0502W}, and the entanglement negativity is an entanglement measure of bipartite mixed states \cite{2006PhDT........59E,2009RvMP...81..865H,2017arXiv170102187D}.

\subsubsection{Entanglement entropy and mutual information}	

First we introduce the entanglement entropy and mutual information. The entanglement entropy, or von Neumann entropy, is defined as
  \begin{equation}\label{bipartite entropy}
  S(\rho_A)= -\Tr (\rho_A \log_2 \rho_A)
  \end{equation}
  where $\rho_A$ is the (reduced) density matrix on (sub)system $A$. We shall sometimes drop the state label and denote the entropy as $S_A$. Suppose the whole system is $A\cup B$. For a pure quantum state of the system, the von Neumann entropy $S_A$ characterizes the entanglement between $A$ and $B$. In other words, the von Neumann entropy is an entanglement measure for bipartite pure states. For a mixed state on $A\cup B$, on the other hand, the von Neumann entropy $S_A$ does not characterize the entanglement between $A$ and $B$ \cite{2006PhDT........59E,2009RvMP...81..865H,2017arXiv170102187D}.

  The \emph{mutual information} is defined as
  \begin{equation}\label{mutual information}
  I_{A,B}= S_A + S_B - S_{AB}.
  \end{equation}
  It is a measure of the total amount of correlations between the two subsystems $A$ and $B$, for possibly mixed quantum states~\cite{2005PhRvA..72c2317G,2008PhRvL.100g0502W}. Correlation can be either quantum or classical. For example, a \emph{separable state}
  \begin{equation}\label{separability}
  \rho_{AB}= \sum_i p_i \rho^i_A \otimes \rho^i_B
  \end{equation}
  can have classical correlation, but it has no quantum entanglement between $A$ and $B$ \cite{2006PhDT........59E,2009RvMP...81..865H,2017arXiv170102187D}. Here $\{\rho^i_A\}$ and $\{ \rho^i_B \}$ are density matrices and $\{p_i\}$ is a probability distribution.

    \subsubsection{Entanglement negativity}
    To quantify the entanglement between two subsystems for a possibly mixed quantum state, one may consider a bipartite mixed state entanglement measure. A number of such entanglement measures were introduced and studied; see \cite{2009RvMP...81..865H,2017arXiv170102187D} for a review. These measures must be entanglement monotones, i.e., they satisfy the condition to be non-increasing under local operations and classical communication (LOCC) between the two regions~\cite{2006PhDT........59E,2002PhRvA..65c2314V}. Furthermore, they vanish for a separable state (\ref{separability}). Therefore these quantities character quantum entanglement rather than the classical correlation for possibly mixed states. 
        Some of these bipartite entanglement measures have very nice theoretical properties. One such example is the squashed entanglement \cite{2004JMP....45..829C}, which is additive. However, calculating it requires a minimization process, and therefore it is hard to calculation in general.
    
    Among these proposed bipartite mixed state entanglement measures, the entanglement negativity is a calculable measure~\cite{2002PhRvA..65c2314V,2006PhDT........59E}. 
    The \emph{logarithmic entanglement negativity}\footnote{An alternative expression, which is called \emph{entanglement negativity} of state $\rho$, between regions $A$ and $B$, is defined as 
    	\begin{equation}
    	\mathcal{N}^{A\vert B}(\rho)= \frac{  \lVert \rho_{AB}^{T_A}  \rVert_1 -1 }{2}. \nonumber
    	\end{equation}
    Throughout the paper, we will stick to the logarithmic entanglement negativity instead of this one.} 
    for state $\rho$, between $A$ and $B$, is defined as
    \begin{equation}\label{log negativity}
    E_N^{A\vert B}(\rho)= \log_2  \lVert \rho_{AB}^{T_A}\rVert_1.
    \end{equation}
    where $T_A$ denotes the partial transpose with respect to region $A$ only, and $\lVert O \rVert_1$ is the trace norm of operator $O$. 
    For a more detailed description, please refer to Appendix~\ref{appendix:stabilizer}. 
    
    In the rest of the paper, we will study the logarithmic entanglement negativity in random unitary circuits with measurements. We shall call it entanglement negativity for short.
    
	\section{The critical behavior of entanglement negativity}\label{sec:results}

	In this section, we discuss our numerical results on the negativity $E_N^{A\vert B}$ as well as mutual information $I_{A,B}$ for the Clifford and Haar circuit models. All physical quantities are calculated as an ensemble average of that on possible late time quantum states. The goal is to understand quantum entanglement properties of the measurement-driven phase transition discovered in Refs.\cite{2018PhRvB..98t5136L,2019PhRvX...9c1009S,2019PhRvB..99v4307C,2019PhRvB.100m4306L}. Previously, numerical results \cite{2019PhRvB.100m4306L,2019PhRvB.100f4204S} provided excellent evidence that single interval entanglement entropy at the critical point scales with the logarithm of the subsystem size in both models.
	Here we divide the 1D system into two disjoint intervals $A$ and $B$, and the remaining region $\overline{A\cup B}$, and investigate the entanglement and correlation between the intervals $A$ and $B$ of the reduced density matrix $\rho_{AB}$ obtained by tracing out region $\overline{A\cup B}$. While the mutual information (\ref{mutual information}) has been studied previously by Ref.\cite{2019PhRvB.100m4306L}, here we focus on the logarithmic entanglement negativity (\ref{log negativity}), which is known to distinguish quantum entanglement from classical correlations between $A$ and $B$ for a mixed state.

To characterize the scaling behavior of the entanglement at the measurement-driven transition, we choose $A$ and $B$ to be disjoint arcs of a circle. We use $x_i \in [0,2\pi)$, $i=1,2,3,4$ to label the angular positions of the endpoints, as shown in Fig.~\ref{fig:cross_ratio}. The cross-ratio is defined as
		\begin{equation}
		\eta= \frac{x_{12} x_{34}}{x_{13} x_{24}}, \label{eq:cross_ratio}
		\end{equation}
		where $x_{ij}$ is the chord distance. For periodic boundary condition
		\begin{equation}
		x_{ij} = \frac{L}{\pi} \sin \left( \frac{\pi}{L} \vert x_i - x_j \vert \right).
		\end{equation}
We will investigate the scaling of entanglement negativity in the limit of a small cross-ratio $\eta\ll 1$. Previously, the scaling of mutual information on two disjoint intervals has been studied by Ref.~\cite{2019PhRvB.100m4306L}.

		\begin{figure}[h]
		\centering
		\begin{tikzpicture}
		\begin{scope}[xshift=2.6 cm, yshift=5.5 cm, scale=1.0]

		\draw[line width=1.5pt] (0,0) circle (1.6);
		\draw[line width=0.8 pt, dotted] (0,0)--(110:1.6);
		\draw[line width=0.8 pt, dotted] (0,0)--(176:1.6);
		\draw[line width=0.8 pt, dotted] (0,0)--(-85:1.6);
		\draw[line width=0.8 pt, dotted] (0,0)--(7:1.6);
		
		\draw[green!54!yellow!80!black,line width=3 pt] (110:1.6) arc (110:176:1.6);
		\draw[green!54!yellow!80!black,line width=3 pt] (-85:1.6) arc (-85:7:1.6);
		
		\node[] (C) at (143:1.92) {$A$};
		\node[] (C) at (-39:1.88) {$B$};
		
		\node[] (C) at (176:1.91) {$x_1$};
		\node[] (C) at (110:1.9) {$x_2$};
		\node[] (C) at (7:1.92) {$x_3$};
		\node[] (C) at (-85:1.905) {$x_4$};

		\end{scope}
		\end{tikzpicture}
		\caption{A possible choice of disjoint intervals $A$ and $B$. Also illustrated are the labels of the endpoints.  }
		\label{fig:cross_ratio}
	\end{figure}
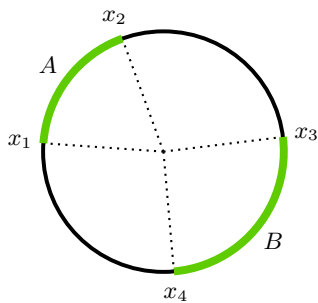

\subsection{Determining the critical measurement probability $p_c$}

		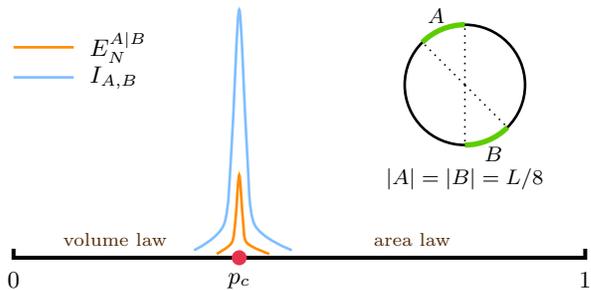
\begin{figure}[h]
	\centering
	\begin{tikzpicture}
	\begin{scope}[xshift=6.2 cm, yshift=0.3 cm, scale=0.50]

	\draw[line width=1pt] (0,0) circle (1.6);
	\draw[line width=0.6 pt, dotted] (0,0)--(90:1.6);
	\draw[line width=0.6 pt, dotted] (0,0)--(135:1.6);
	\draw[line width=0.6 pt, dotted] (0,0)--(-90:1.6);
	\draw[line width=0.6 pt, dotted] (0,0)--(-45:1.6);
	
	\draw[green!54!yellow!80!black,line width=2 pt] (90:1.6) arc (90:135:1.6);
	\draw[green!54!yellow!80!black,line width=2 pt] (-90:1.6) arc (-90:-45:1.6);
	
	\node[] (C) at (112.5:2) {\footnotesize{$A$}};
	\node[] (C) at (112.5+180:2) {\footnotesize{$B$}};
	
	\node[] (C) at (-90:2.5) {\footnotesize{$\vert A\vert = \vert B\vert = L/8$}};
	
	\end{scope}

	\begin{scope}[yshift=-2 cm]
	\draw[line width=1.5pt] (0.2,0.12)--(0.2,0) -- (7.8,0) -- (7.8, 0.12);
	
	\node[] (C) at (0.2,-0.3) {$0$};	
	\node[] (C) at (7.8,-0.3) {$1$};	
	
	\fill[red!50!purple!80!white] (3.2,0) circle (0.1);
	\node[] (C) at (3.2,-0.3) {$p_c$};	
	
	\node[] (C) at (1.55,0.25) {\color{red!50!yellow!30!black}\scriptsize{volume law}};
	\node[] (C) at (5.5,0.25) {\color{red!50!yellow!30!black}\scriptsize{area law}};
	
	\draw[red!45!yellow!99!white,line width=1 pt] (0.2,2.8)--(1,2.8);
	\node[right] (C) at (1.1, 2.8) {$E_N^{A\vert B}$ };
	
	\draw[ blue!50!cyan!50!white,line width=1 pt] (0.2,2.4)--(1,2.4);
	\node[right] (C) at (1.1, 2.4) {$I_{A,B}$ };
	
	\draw[red!45!yellow!99!white,line width=1 pt] plot [smooth] coordinates {(2.9, 0.05)(3.1, 0.2) (3.15, 0.5) (3.2, 1.1) (3.25, 0.5)  (3.3, 0.2) (3.6, 0.05) };
	\draw[blue!50!cyan!50!white,line width=1 pt] plot [smooth] coordinates {(2.9 -0.3, 0.05*2)(3.1 -0.1, 0.15*3) (3.15 -0.05, 0.5*3) (3.2, 1.1*3) (3.25+0.05 , 0.5*3)  (3.3+0.1, 0.15*3) (3.6+0.3, 0.05*2) };
	\end{scope}
	\end{tikzpicture}
	\caption{Schematic phase diagram of the random circuit model with projective measurements, as a function of measurement rate $p$. The volume law phase (left) and area law phase (right) are separated by the measurement-driven entanglement transition at $p_c$, the critical measurement rate. For relatively small disjoint intervals $A$ and $B$, fixed as two antipodal regions with $\vert A\vert = \vert B\vert = L/8$ ($\eta\approx 0.146$) in this diagram, both the mutual information and the entanglement negativity provide a sharp feature of the critical point. Both $I_{A,B}$ and $E_N^{A\vert B}$ are only nonzero nearby the critical point, and they quickly vanish upon entering either the volume law phase or the area law phase.}
	\label{fig:phase_diagram_cartoon}
\end{figure}

Previous studies~\cite{2018PhRvB..98t5136L,2019PhRvX...9c1009S,2019PhRvB..99v4307C,2019PhRvB.100m4306L} have established the phase diagram of the Clifford and Haar circuit models and identified the measurement-driven phase transition. There is a highly entangled phase, at a small measurement rate $p<p_c$ with volume law entanglement entropy for a single interval, and a disentangled phase at a large measurement rate $p>p_c$ with an area law entanglement entropy; see Fig.~\ref{fig:phase_diagram_cartoon} for an illustration. One can numerically determine the critical measurement probability $p_c$ for both models. 

Our strategy of determining $p_c$ is the based on the observation \cite{2019PhRvB.100m4306L} that the mutual information $I_{A,B}$ has a sharp peak at $p_c$ and it vanishes quickly with $|p-p_c|$ once entering the volume (area) law phase for a small cross-ratio $\eta$. This is illustrated by the blue curve in Fig.~\ref{fig:phase_diagram_cartoon}. We find that the entanglement negativity exhibits a similar peak at $p_c$, shown by the orange curve in Fig.~\ref{fig:phase_diagram_cartoon}. Using these sharp features, we can estimate $p_c$ for the measurement-driven phase transition.
The details of data collapse analysis to determine $p_c$ in the Clifford circuit model are presented in the Appendix~\ref{appendix:data_collaps_details}. 
The value of $p_c$, which we shall use in the remaining part of this paper, is:
\begin{itemize}
	\item $p_c\approx 0.16$ for the Clifford circuit model, consistent with Ref.~\cite{2019PhRvB.100m4306L}.
	\item $p_c\approx 0.26$ for the Haar circuit model with $L=20$, consistent with Ref.~\cite{2019PhRvX...9c1009S,2020PhRvB.101j4301B}.
\end{itemize}

 \subsection{Scaling of entanglement negativity at $p_c$}

 The main result of this work is a power-law dependence of entanglement negativity on the cross-ratio (\ref{eq:cross_ratio}), right at the critical point. We compare this power-law behavior with that of the mutual information; see Fig.~\ref{fig:data_main} for the plot. We vary the size and position of the interval when we collect data.  (Namely, we symmetrically adjust the position of the four points on the ring while keeping two intervals of equal length.)
In both the Clifford circuit model and the Haar circuit model, we obtain the same power-law behavior for two disjoint intervals $A$ and $B$ in Fig. \ref{fig:cross_ratio}:
\begin{itemize}
	\item $E_N^{A\vert B}\propto \eta^{\Lambda}$, where $\Lambda\approx 3$.
	\item $I_{A,B}\propto \eta^{\Delta}$, where $\Delta\approx 2$. This result agrees with Ref.~\cite{2019PhRvB.100m4306L}.	
\end{itemize}
More precisely, we find that 
\begin{itemize}
	\item $\Lambda_{\textrm{Clifford}}=3.04\pm 0.08$ and $\Lambda_{\textrm{Haar}}=2.73 \pm 0.28$.
	\item $\Delta_{\textrm{Clifford}}=2.16\pm 0.03$ and $\Delta_{\textrm{Haar}}=1.89 \pm 0.26$.
\end{itemize}
Here the error bars come from the standard deviation of the linear fit.

	\begin{figure}[h]
		\includegraphics[width=0.8\columnwidth]{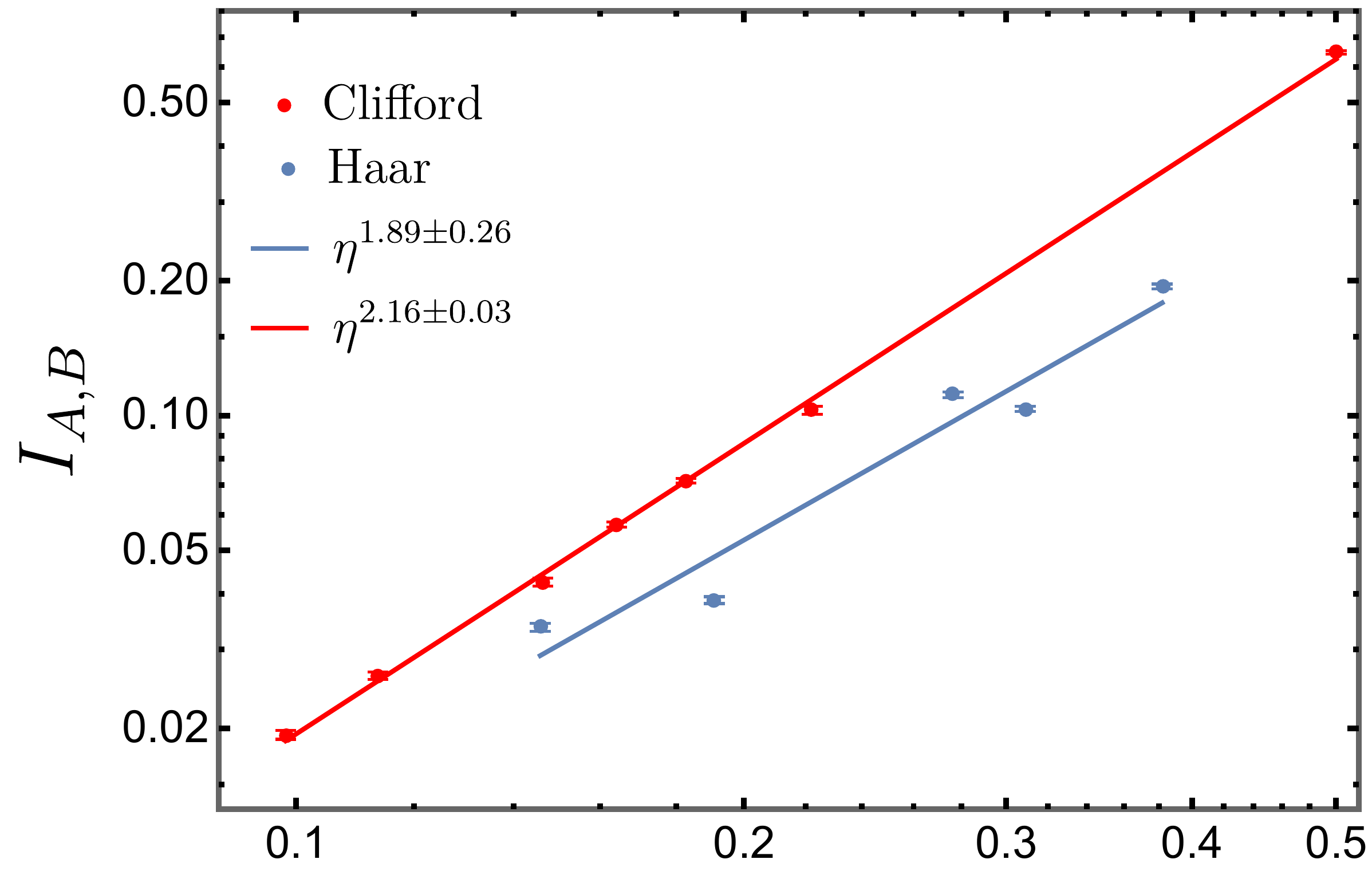}
		\includegraphics[width=0.82\columnwidth]{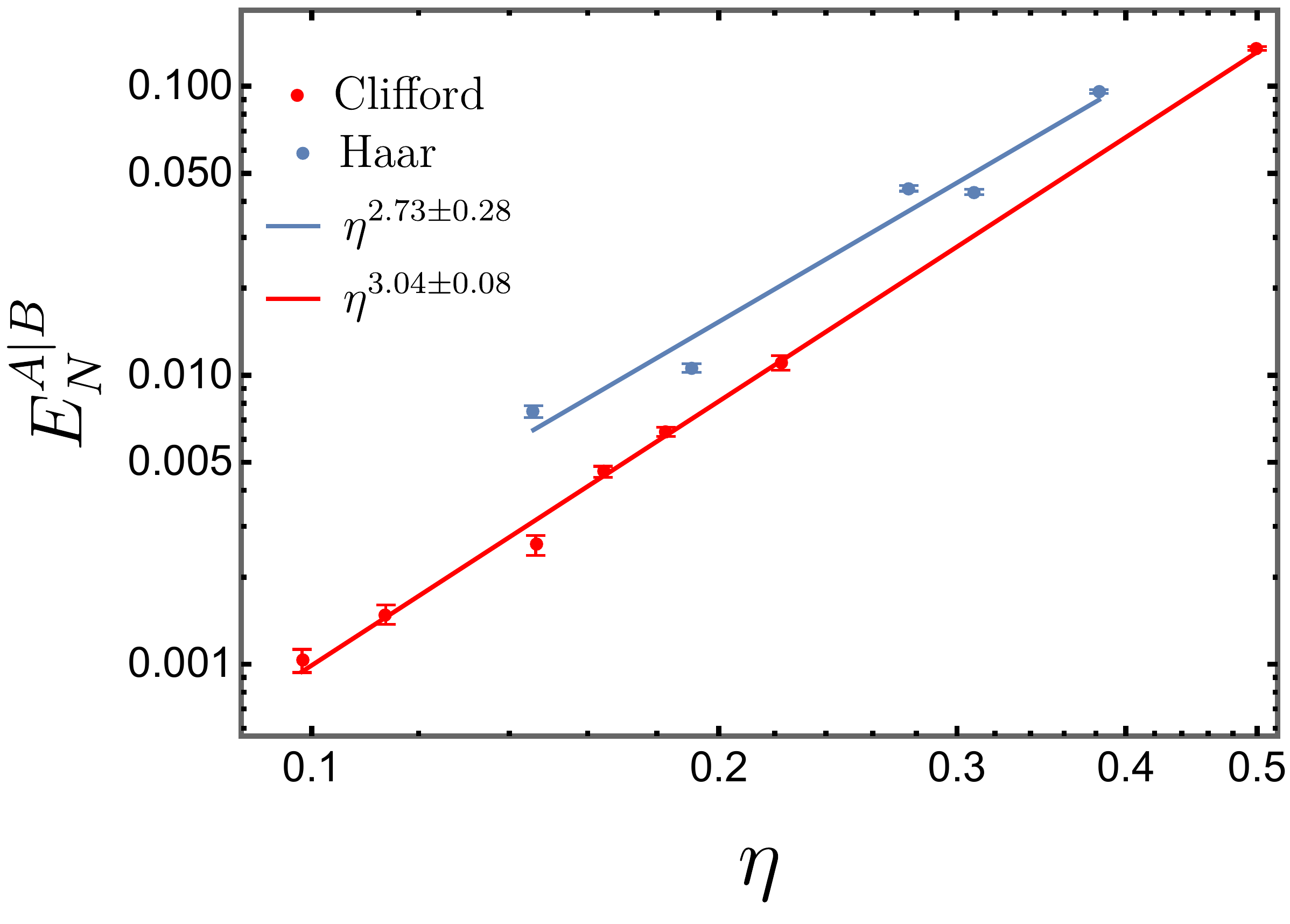}
		\caption{Mutual information $I_{A,B}$ and (logarithmic) entanglement negativity $E_N^{A\vert B}$ scales as a power of cross-ratio $\eta$ at the critical point, for small $\eta$. (We take $p_c=0.16$ and $L=512$ for Clifford and $p_c=0.26$ and $L=20$ for Haar.) The first figure suggests that the power-law scaling exponent of the mutual information are $\Delta_{\textrm{Clifford}}=2.16\pm 0.03$ and $\Delta_{\textrm{Haar}}=1.89 \pm 0.26$. The second figure suggests that the scaling exponent of the (logarithmic) entanglement negativity are $\Lambda_{\textrm{Clifford}}=3.04\pm 0.08$ and $\Lambda_{\textrm{Haar}}=2.73 \pm 0.28$.} \label{fig:data_main}
	\end{figure}

Both the entanglement negativity and the mutual information scale as a power of the cross-ratio, albeit that the power $\Lambda\approx 3$ is different from $\Delta\approx 2$.
The mutual information characterizes the total amount of correlation between two regions, including both classical correlation and quantum entanglement. On the other hand, the entanglement negativity quantifies the amount of quantum entanglement between two regions.
The scaling dimension $\Lambda > \Delta$ physically means that the quantum entanglement between two disjoint intervals drops faster than the correlation when the distance between the intervals increases. It is physically reasonable to expect $\Lambda \geq\Delta$, because the amount of quantum entanglement must be bounded by the total amount of correlations.\footnote{Depending on the specific entanglement measure in question, the bound may not look simple. However, in the context of the stabilizer states, which is relevant to the random Clifford circuit model with measurement, a simple bound exists: $I_{A,B} \ge 2 E_N^{A\vert B}$; see Theorem~\ref{Thm:NE_VS_Mutaul_stabilizer} in Appendix~\ref{appendix:stabilizer}. }

Can these critical phenomena be captured by an effective field theory? As observed in previous studies\cite{2019PhRvB.100m4306L,2020arXiv200312721L,2020PhRvB.101j4302J}, the power-law behavior ($I_{A,B}\propto \eta^{\Delta}$) of the mutual information (also see Ref.~\cite{2019PhRvX...9c1009S} for the 0-th R\'{e}nyi mutual information) suggests an emergent conformal field theory (CFT) description of the critical point. One way this might work is to match the entanglement behavior of the measurement-driven transition to that of some CFT ground-state. For a unitary CFT ground state, the mutual information of two disjoint intervals is known to depend on the full operator content of the theory \cite{2009JSMTE..11..001C,2011JSMTE..01..021C,2011JSMTE..01..021C}, and calculation must be done case by case.
Nevertheless, if the CFT is compact and unitary, the 2nd Renyi mutual information is rigorously shown to be a power of $\eta$ at small $\eta$, and the power is determined knowing the dimension of the lowest non-unit operator in the CFT~\cite{2010PhRvD..82l6010H}. Therefore, the scaling of mutual information is analogous to that for a unitary CFT ground state. The power constrains the nature of any CFT that may describe the measurement-driven transition.

The same question arises for entanglement negativity: what does the scaling of negativity imply about the nature of the measurement-drive transition? As a main result of this paper, we numerically observe a power-law dependence of the logarithmic entanglement negativity (\ref{log negativity}) on the cross-ratio (\ref{eq:cross_ratio}) at the critical point ($E_N^{A\vert B}\propto \eta^{\Lambda}$). For unitary CFT ground states, the entanglement negativity is known to decay faster than any power law of cross-ratio for small $\eta$~\cite{2012PhRvL.109m0502C,2013JSMTE..02..008C,2015JSMTE..06..021D}. Thus our result suggests that the entanglement behavior at the critical point of the measurement-driven transition is distinct from that of any unitary CFT ground state.  

Nonunitary CFTs \cite{2005ffsc.book.1384G,2006hep.th....7232P} are natural candidates of the low-energy effective field theory because they obey the conformal symmetry, and the behavior of entanglement negativity may not obey the same rules as the unitary CFTs~\cite{2012PhRvL.109m0502C,2013JSMTE..02..008C}, as discussed in Ref.~\cite{2018ScPP....4...31D} and the references therein. However, a careful future study is needed to determine the relevance and the nature of possible emergent nonunitary CFT effective description of the critical point in the Clifford and Haar circuit models. The scaling exponent $\Lambda\approx 3$, can provide useful information in constraining the nature of this non-unitary CFT.

\subsection{Different ``Quantumness" of Haar vs. Clifford circuit models}

The Clifford circuit model and the Haar circuit model have almost identical scaling exponents within the error bar, for both the mutual information and the entanglement negativity, as shown in Fig.~\ref{fig:data_main}. However, we have observed a slight difference in the ``quantumness'' of these two models, which we describe below.    

For a generic mixed quantum states, while mutual information (\ref{mutual information}) measures the total amount of (both classical and quantum) correlation between two disjoint intervals $A$ and $B$, the entanglement negativity vanishes for any separable states and therefore characterizes the quantum entanglement between the two intervals. Consequently the ratio of the negativity to the mutual information provides a measure of how much correlation comes from quantum entanglement between two regions $A$ and $B$, i.e. the ``quantumness'' of the state. To compare the Clifford and the Haar circuit models, we define the following quantity:
\begin{equation}
R(\eta) \equiv \left(\frac{E_N^{A\vert B}(\eta)}{I_{A,B}(\eta)} \right)_{\textrm{Haar}} \left/ \left(\frac{E_N^{A\vert B}(\eta)}{I_{A,B}(\eta)} \right)_{\textrm{Clifford}} \right. \label{quantum ratio}
\end{equation}
to feature the excess amount of ``quantumness" in the Haar circuit model as compared to Clifford circuit model. $\eta$ is the cross-ratio defined in (\ref{eq:cross_ratio}).

Our numerical data suggests that $R(\eta)$ depends weakly on the cross-ratio $\eta$:
\begin{equation}
R(\eta)=R_0\eta^{\delta}, \quad R_0=2.94, \, \delta = -0.04 \pm 0.39.
\end{equation}
$R(\eta)\approx 3$ in the whole range of cross-ratio that we simulate. While this does not imply any difference in the effective field theory description of the two models, this does suggest that there is a larger fraction of quantum entanglement in the Haar circuit model than the Clifford circuit model. Therefore the Haar circuit model appears to be more quantum than the Clifford circuit model at the measurement-driven transition.

	\section{Summary and discussion}\label{sec:summary}
	
In this work, we have numerically studied the correlation and entanglement between two disjoint intervals at the measurement-driven phase transition in 1D Clifford and Haar random unitary circuits. We focus on the mutual information (\ref{mutual information}) and the logarithmic entanglement negativity (\ref{log negativity}). The late time quantum states are generically mixed on the union of the two intervals. Therefore, the mutual information measures the total correlation between the two intervals, whereas the entanglement negativity characterizes the quantum entanglement between the two intervals. We numerically simulate 1D systems with a periodic boundary condition, up to a system size $L=768$ for the Clifford circuit model and $L=20$ for the Haar circuit model. We identified a power-law behavior of the (logarithmic) entanglement negativity at small interval sizes. For a pair of disjoint intervals $A$ and $B$,
	\begin{equation}
	E_N^{A\vert B} \propto \eta^{\Lambda} \quad \textrm{with }\, \Lambda\approx 3 \label{eq:power_NE_summary}
	\end{equation}
	 at $\eta\ll 1$, where $\eta$ is the cross-ratio for the pair of disjoint interval defined in (\ref{eq:cross_ratio}).
	
	It is interesting to compare (\ref{eq:power_NE_summary}) with recently obtained power-law behavior of the mutual information~\cite{2019PhRvB.100m4306L}.
	\begin{equation}
	I_{A,B} \propto \eta^{\Delta} \quad \textrm{with }\, \Delta\approx 2 \label{eq:power_mutual_summary}
	\end{equation}
	at $\eta\ll 1$. The fact that the scaling dimension $\Lambda > \Delta$ suggests that the quantum entanglement between two disjoint intervals decays faster than the correlation as we increase the distance between the intervals. Moreover, we have observed that the Haar circuit model is ``more quantum'' than the Clifford circuit model, by comparing the ratio (\ref{quantum ratio}) of the entanglement negativity to mutual information in the two models.

As the main conclusion of this work, the power-law dependence of negativity on the cross-ratio indicates that the description of the measurement-driven transition in random unitary circuits is distinct from the ground state of any unitary conformal field theory. 
Previous studies on the scaling of single interval von Neumann and Renyi (entanglement) entropies, as well as the mutual information for two disjoint intervals, point to an emergent conformal invariance of the critical point\cite{2020arXiv200312721L}. However, the precise nature of the CFT describing the phase transition is unclear\cite{2019PhRvB.100m4306L,2020arXiv200312721L}. Remarkably, unlike the power-law mutual information between two disjoint intervals, in a unitary CFT ground state, the negativity always decays faster than any power law of cross-ratio $\eta$ for small $\eta$~\cite{2012PhRvL.109m0502C,2013JSMTE..02..008C}. Therefore, the power-law behavior of entanglement negativity provides direct evidence that the entanglement property of the measurement-driven transition is distinct from that of the ground state of any unitary CFT.

Conformal symmetry may, instead, enter the story through statistical mechanical models. In such cases, the resulting property does not necessarily match the ground state properties~\cite{2008PhRvL.100h7205J}.
Previously, Vasseur et al. \cite{2019PhRvB.100m4203V} argued a vanishing central charge ($c=0$) for the CFT corresponding to the measurement-driven transition, and indicated that this CFT belongs to a class of non-unitary Logarithmic CFT. This is achieved by using the replica trick and mapping the entanglement entropy in a random unitary circuit with measurements to the change of free energy of a 2D statistical mechanical model w.r.t. twisting the boundary condition\cite{2018PhRvX...8b1014N,2020PhRvB.101j4301B,2020PhRvB.101j4302J,2020arXiv200212385F}. The critical point of the statistical mechanical model belongs to the $(Q!)$-state Potts model in the $Q\rightarrow1$ limit, in the universality class of 2D percolation, perturbed by a relevant 2-hull operator. In the current work, inspired by recent numerical evidence of emergent conformal invariance in the mutual information, we assume the critical behavior of the measurement-driven transition is described by a CFT ground state, analogous to continuous phase transitions in equilibrium. The power-law behavior of entanglement negativity hence suggests this CFT cannot be unitary. Currently, it is not clear how to exactly map the entanglement negativity to a 2D statistical mechanics model, employing the idea of Ref.~\cite{2019PhRvB.100m4203V}. This is an interesting question to be addressed in the future.

While the power-law negativity in this work provides an extra constraint on the critical theory, many questions remain open for this entanglement phase transition. For example, how to extract other critical exponents of this dynamical critical point? What is the precise nature of the conformal symmetry that emerged at the critical point of the measurement-driven phase transition? How to understand the scaling power that we observed? We leave these questions to future works.

	\section*{Acknowledgments}
	We thank Tarun Grover, Matthew Fisher, Brian Skinner, Sagar Vijay, Xueda Wen, Yahya Alavirad, John McGreevy, Yi-Zhuang You for interesting discussions. We are especially grateful to Romain Vasseur for his helpful comment on the nature of the conformal field theory in statistical mechanics models. This work is supported by the National Science Foundation under Grant No. NSF DMR-1653769 (BS, XD, YML), University of California Laboratory Fees Research Program, grant LFR-20-653926 (BS), and the Simons Collaboration on Ultra-Quantum Matter, grant 651440 from the Simons Foundation (BS).
	
	\emph{Note: During the preparation of this paper, we became aware of an independent work by Shengqi Sang, Yaodong Li, Tianci Zhou, Xiao Chen, Timothy H. Hsieh, and Matthew P. A. Fisher, who also studied entanglement negativity at the same measurement-driven transition in random unitary circuits (Ref.~\cite{Sang2020}).}
	
	\appendix

	\section{Entanglement of stabilizer states}\label{appendix:stabilizer}

In this appendix, we provide some details of the entanglement of stabilizer states. In Section~\ref{ap:stabilizer_state}, we provide a brief review of the stabilizer states and set up the notation for the remaining discussion. In Section~\ref{ap:stabilizer_entropy}, we review the calculation of entanglement entropy (bipartite pure state entanglement measure) of the stabilizer states. In Section~\ref{ap:NE_1}, \ref{ap:NE_2} and \ref{Sec:Proof}, we describe a method to calculation entanglement negativity (a bipartite mixed state entanglement measure) of stabilizer states. While this appendix is relatively self-contained, we mention that various statements in this appendix may be seen from a useful alternative viewpoint on the multipartite entanglement of stabilizer states \cite{2006JMP....47f2106B}.

As a reminder, Clifford unitary operators and projective measurements in the $Z$-basis, when applied to a stabilizer state, corresponds to an update of the stabilizer generators. The state after each step of evolution is again a stabilizer state. Therefore, the states we consider in the random Clifford circuit model with measurements are stabilizer states. For a review of how to update the stabilizers for Clifford unitary evolution and projective measurement, see \cite{1998quant.ph..7006G,2004PhRvA..70e2328A} and appendices of  \cite{2019PhRvB.100m4306L} and \cite{2017PhRvX...7c1016N}.

\subsection{Stabilizer states}\label{ap:stabilizer_state}
We will consider a quantum system built up from a set of qubits. The number of qubit of the whole system is $L$. In other words, the total Hilbert space is 
\begin{equation}
\mathcal{H}= \mathcal{H}_2^{\otimes L},\quad \mathcal{H}_2=span\{ \vert 0\rangle, \vert 1\rangle \}. 
\end{equation}

Below, we define three concepts: stabilizer group, stabilizer generators, and stabilizer states.

\begin{definition} (stabilizer group)
	The stabilizer group $\mathcal{G}$ is defined to be the Abelian group generated by the following operators $\{ g_1, g_2,\cdots , g_L \}$ acting on $\calH$. These operators, which we shall call stabilizer generators, satisfy:
	\begin{enumerate}
		\item Each $g_i$ is a product of Pauli operators. In other words, it is a product of operators acting on individual sites, and on each site, the operator (up to a phase factor) can be chosen from the set $\{ I,X,Y,Z \}$. Here $I$ is the identity, $X$, $Y$, $Z$ are single-qubit Pauli operators.    
		\item $g_i^{\dagger}= g_i$ and $g_ig_j =g_j g_i$, $\forall i,j$. 
		\item The set of operators are independent, i.e. 
		\begin{equation}
		g_i \ne \prod_{j\ne i} g_j^{s_j},\quad s_j =0,1
		\end{equation}
	\end{enumerate}
\end{definition}
\begin{remark}
	The definition implies $g_i^2=1$. It follows from the definition that $\vert \mathcal{G}\vert = 2^L$, where $\vert \mathcal{G}\vert$ is the number of group element.
	Due to the independence condition, the stabilizer generators can independently take $\pm 1$. For each choice, there is a unique quantum state (up to the overall phase factor) that has these as the eigenvalues. 
\end{remark}

\begin{definition}(stabilizer Hamiltonian)
	We define the following Hamiltonian as the stabilizer Hamiltonian of $\mathcal{G}$:
	\begin{equation}
	H=-\sum_{j=1}^L \,g_j, \label{eq:stabilizer_Hamiltonian}
	\end{equation}
	where $\{ g_1, g_2,\cdots , g_L \}$ is the set of stabilizer generators.	
\end{definition}

\begin{definition} (stabilizer state)
	The stabilizer state of $\mathcal{G}$ is the unique quantum state (up to an overall phase factor) $\vert \psi \rangle$ that satisfies
	\begin{equation}
	h \vert \psi\rangle = \vert \psi\rangle ,\quad  \forall h\in\mathcal{G}.
	\end{equation}
\end{definition}
\begin{remark}
	The stabilizer state is the unique ground state of the stabilizer Hamiltonian (\ref{eq:stabilizer_Hamiltonian}). 
\end{remark}

We define the following projector:
\begin{equation}
P \equiv  \frac{1}{\vert \mathcal{G} \vert } \sum_{ h\in \mathcal{G}} h\quad \Leftrightarrow \quad P =  \frac{1}{\vert \mathcal{G} \vert } \prod_{j=1}^L (1+g_j).
\end{equation}
It satisfies
\begin{eqnarray}
P&=&P^{\dagger}\\
P^2 &=& P.
\end{eqnarray}
In fact, $P$ is the projector to the ground subspace of the stabilizer Hamiltonian and therefore
\begin{equation}
P =\vert \psi\rangle \langle \psi\vert.
\end{equation}
In this way, we get a neat formula for the ground state density matrix.

Let us consider a subsystem $A$ of the system and let the system be $AB$. Let $L_A$ be the number of qubits in $A$.
Let $\mathcal{G}_{A}$  be the subgroup of stabilizers supported on $A$. See the formal definition below.
\begin{definition}
	\begin{equation}
	\mathcal{G}_{A}\equiv \{ h_A \, \vert h_A \otimes 1_B \in \mathcal{G} \}.
	\end{equation} 
\end{definition}
It is easy to say $\mathcal{G}_A$ is isomorphic to a subgroup of $\mathcal{G}$.

The reduced density matrix of the stabilizer state on subsystem $A$, $\rho_A \equiv \Tr_B \vert \psi\rangle \langle \psi\vert$, can be written as
\begin{equation}
\rho_A =\frac{ \vert \mathcal{G}_A\vert  }{ 2^{L_A}} P_A,\quad P_A\equiv \frac{1}{\vert \mathcal{G}_A\vert } \sum_{h_A \in  \mathcal{G}_A } h_A \label{eq:rho_A}
\end{equation}
or simply
\begin{equation}
\rho_A =\frac{ 1 }{ 2^{L_A}} \sum_{h_A \in \mathcal{G}_A} h_A.
\end{equation}

One easily verifies that $P_A$ in (\ref{eq:rho_A}) is a projector, and therefore
\begin{equation}
\rho_A^2 = \frac{ \vert \mathcal{G}_A\vert  }{ 2^{L_A}} \rho_A . \label{eq: flat}
\end{equation}
A density matrix satisfying (\ref{eq: flat}) must have a flat entanglement spectrum. In other words, all the nonzero eigenvalues of $\rho_A$ are the same: $\lambda= { \vert \mathcal{G}_A\vert  }/{ 2^{L_A}}$. Furthermore, $\vert \mathcal{G}_A\vert = 2^{m_A}$ for some nonnegative integer $m_A$. 

\subsection{Entanglement entropy of stabilizer states}\label{ap:stabilizer_entropy}
With the knowledge of the stabilizer states discussed above, it is straightforward to calculation its entanglement entropy. For a stabilizer state $\rho_A$, defined in (\ref{eq:rho_A}), we have
\begin{equation}
S(\rho_A) = S_{\alpha}(\rho_A) = (L_A - m_A), \quad \alpha\in [0,1)\cup (1,\infty) \label{eq:Renyi}
\end{equation}
Here $S(\rho_A)=-\Tr(\rho_A \log_2 \rho_A)$ is the von Neumann entropy and $S_{\alpha}$ is the Renyi entropy of order $\alpha$, defined as
\begin{equation}
S_{\alpha}(\rho_A) =\frac{1}{1-\alpha} \log_2 [\Tr\, (\rho_A)^{\alpha}].
\end{equation}
Note that Renyi entropy is not defined for $\alpha =1$. Nevertheless, Renyi entropy at the limit $\alpha \to 1$ (in both $1^{+}$ and $1^{-}$ direction) gives us the von Neumann entropy.
As an aside, we can infer from (\ref{eq:Renyi}) that $m_A\le L_A$.

A practically useful method is to write the binary vectors of a known set of stabilizer generators $\{ g_1, \cdots, g_L\}$ as a $L\times 2L$ matrix over $F_2$:
\begin{equation}
M= (M_A \vert M_B),
\end{equation}
where $M_A$ is a $L\times 2L_A$ matrix and $M_B$ is a $L\times 2 L_B$ matrix; we have partitioned the systems into $A$ and $B$ of length $L_A$ and $L_B= L-L_A$. Here each stabilizer generator is mapped to a row of length $2L$. (Explicitly,  we shall choose the binary representations $I=(00)$, $X=(10)$, $Z=(01)$, $Y=(11)$.)

We have the formula:
\begin{equation}
m_A= L_{AB}- \Rk(M_B),
\end{equation}
where the rank $\Rk(M_B)$ is defined over field $F_2$.

It is easy to see, the von Neumann entropy of a stabilizer state, $\rho_A$ of (\ref{eq:rho_A}), is
\begin{equation}
S(\rho_A) = \Rk(M_A) - L_A. \label{eq:S_A_practical}
\end{equation}

\begin{exmp}\label{Example:EE}
	Let $L=2$ and the set of stabilizer generators be $\{XX, ZZ\}$. Then the matrix $M$ reads
	\begin{equation}
	M=\left( \begin{array}{cccc}
	1&0&1&0\\0&1&0&1
	\end{array} \right). 
	\end{equation}
	Let $L_A=1$, then
	\begin{equation}
	M_A =\left( \begin{array}{cc}
	1&0\\0&1
	\end{array} \right) \quad \Rightarrow\quad \Rk(M_A)=2.
	\end{equation} 
	According to (\ref{eq:S_A_practical}), we have $S(\rho_A)=1$.
\end{exmp}

\subsection{Entanglement negativity basics}\label{ap:NE_1}

Entanglement negativity is a bipartite mixed state entanglement measure. While it goes back to (Renyi $\alpha=1/2$) entanglement entropy for a pure state, for a mixed state it is in general different from any linear combinition of entanglement entropies. It characterizes the quantum entanglement between two regions $A$ and $B$ for a possibly mixed state $\rho_{AB}$, rather than the correlations between $A$ and $B$. For example, we may consider a \emph{separable} state
\begin{equation}
\rho_{AB}= \sum_i p_i \, \rho^i_A \otimes \rho_B^i
\end{equation}
where $\{p_i\}$ is a probability distribution and $\rho^i_A $, $\rho^i_B$ are density matrices.
In general, a separable state can have nontrivial correlation between the two subsystems because, it is possible that the \emph{mutual information}
\begin{equation}
I_{A,B} (\rho_{AB})\equiv S(\rho_A) + S(\rho_B) - S(\rho_{AB}) > 0.
\end{equation}
However, for a separable state, the correlation is classical. This correlation can be generated from local operation and classical communication (LOCC). Therefore, it cannot be used as a resource for quantum teleportation. 

As a quantity non-increasing under LOCC \cite{2002PhRvA..65c2314V},  (logarithmic) entanglement negativity can characterize the quantum entanglement between $A$ and $B$, for a mixed state $\rho_{AB}$.

\begin{definition} (logarithmic entanglement negativity)
	For $\rho_{AB}$, let us define the logarithmic entanglement negativity as
	\begin{equation}
	E_N^{A\vert B}(\rho_{AB})= \log_2  \lVert \rho_{AB}^{T_A}\rVert_1.
	\end{equation}
\end{definition}

Below, we explain what is $T_A$ and the trace norm $\lVert A \rVert_1$.

To define $T_A$, we need to specify an orthonormal basis of subsystem $A$, and the operation $T_A$ will (in general) be different if we pick a different	basis. (Nevertheless, one can show that different $T_A$ will result in the same trace norm $\lVert \rho_{AB}^{T_A}\rVert_1$. For this reason, we do not need to care about the basis choice.

Let us pick an orthonormal  basis $\{ \vert i_A\rangle \}$ of $\mathcal{H}_A$ for $T_A$ and for convenience, we also pick an orthonormal basis $ \{ \vert j_B \rangle  \}$ for $\mathcal{H}_B$. Then, $T_A$ can be defined as the linear transformation on the operators acting on $\mathcal{H}_{AB}$ that satisfies 
\begin{equation}
(\vert i_A, j_B\rangle \langle i'_A, j'_B\vert )^{T_A}= \vert i'_A, j_B\rangle \langle i_A, j'_B\vert .
\end{equation}

The trace norm $\lVert A \rVert_1$ is defined as 
\begin{equation}
\lVert A \rVert_1 \equiv \Tr\sqrt{A^{\dagger} A}.
\end{equation}
When $A$ is a Hermitian operator, $\lVert A \rVert_1$ equals to sum of the absolute values of the eigenvalues of $A$. 
Note that $\rho_{AB}^{T_A}$ is a Hermitian operator when $\rho_{AB}$ is a density matrix.

\begin{remark}
	If $\rho_{AB}= \vert \psi_{AB}\rangle \langle \psi_{AB}\vert$ is a pure state density matrix, then 
	\begin{equation}
	E_{N}^{A\vert B}(\rho_{AB})= S_{\frac{1}{2}}(\rho_{AB}).
	\end{equation}
	Entanglement negativity is more interesting if we consider a tripartite pure state $\vert \psi_{ABC}\rangle$ or consider a mixed state $\rho_{AB}$.
\end{remark}

\subsection{Entanglement negativity of stabilizer states}\label{ap:NE_2}
Let $\vert \psi_{ABC}\rangle$ be a stabilizer state on a tripartite system $ABC$.
We trace out the subsystem $C$ and get 
\begin{equation}
\rho_{AB} =\frac{ \vert \mathcal{G}_{AB} \vert  }{ 2^{L_{AB}}} P_{AB},\quad P_{AB}\equiv \frac{1}{\vert \mathcal{G}_{AB}\vert } \sum_{h_{AB} \in  \mathcal{G}_{AB} } h_{AB}. \label{eq:rho_AB}
\end{equation}
This fact has been discussed around (\ref{eq:rho_A}).

Let us denote the set of stabilizer generators of the subgroup $\mathcal{G}_{AB}$ as $\{ h_{AB}^i \}_{i=1}^{m_{AB}}$, where the integer $m_{AB}$ is the number of stabilizer generators of $\mathcal{G}_{AB}$.
Then, a question is how to calculate $E_{N}^{A\vert B}({\rho_{AB}})$ efficiently given these stabilizer generators.

We find that it is useful to define a $m_{AB}\times m_{AB}$ symmetric matrix $J$ over field $F_2$:
\begin{equation}
J_{ij}= \left\{ \begin{array}{ll}
1 & \{ h^i_{A}, h^j_{A}\}=0,\\
0 & \textrm{otherwise,}
\end{array} \right. \label{eq:def_J}
\end{equation}
where we have factorized the stabilizer generators as
\begin{equation}
h_{AB}^i = h_A^i \otimes h_B^i.
\end{equation}
Note that, $J_{ii}=0$ for all $i$. Therefore, $J$ may also be treated as a skew-symmetric matrix over $F_2$.

\begin{theorem}\label{Thm:NE_stabilizer}
	For a stabilizer density matrix $\rho_{AB}$ in (\ref{eq:rho_AB}), 
	\begin{equation}
	E_N^{A\vert B} (\rho_{AB})= \frac{1}{2} \Rk(J),
	\end{equation}
	where $\Rk(J)$ is the rank of $J$ over field $F_2$.
\end{theorem}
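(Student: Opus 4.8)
The plan is to evaluate $\lVert\rho_{AB}^{T_A}\rVert_1$ directly from the Pauli-group structure of $\rho_{AB}$ in (\ref{eq:rho_AB}). First I would fix the computational basis for the partial transpose, so that $X^{T}=X$, $Z^{T}=Z$, $Y^{T}=-Y$; then every Pauli string $h_{AB}=h_A\otimes h_B\in\mathcal{G}_{AB}$ satisfies $h_{AB}^{T_A}=\epsilon(h)\,h_{AB}$ with $\epsilon(h)=(-1)^{n_Y(h_A)}\in\{+1,-1\}$, where $n_Y(h_A)$ counts the sites of $A$ on which $h_A$ acts as $Y$. By linearity of $T_A$ together with $\rho_{AB}=2^{-L_{AB}}\sum_{h\in\mathcal{G}_{AB}}h$, this gives $\rho_{AB}^{T_A}=2^{-L_{AB}}Q$ with $Q\equiv\sum_{h\in\mathcal{G}_{AB}}\epsilon(h)\,h$. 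The crucial input is the identity
\[
\epsilon(hg)=\epsilon(h)\,\epsilon(g)\,(-1)^{\langle h_A,g_A\rangle},\qquad h,g\in\mathcal{G}_{AB},
\]
where $\langle P,P'\rangle\in\{0,1\}$ equals $1$ exactly when the Pauli strings $P,P'$ anticommute; the matrix $J$ of (\ref{eq:def_J}) is precisely the Gram matrix of this $F_2$-bilinear form on the generators of $\mathcal{G}_{AB}$.

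Granting this identity, I would square $Q$. Substituting $g=hh'$ and summing over $h$ for fixed $g$,
\[
Q^2=\sum_{g\in\mathcal{G}_{AB}}\epsilon(g)\Big(\sum_{h\in\mathcal{G}_{AB}}(-1)^{\langle h_A,g_A\rangle}\Big)g .
\]
Since $h\mapsto\langle h_A,g_A\rangle$ is an $F_2$-valued group homomorphism on $\mathcal{G}_{AB}$, the inner sum equals $2^{m_{AB}}$ when $g$ lies in the subgroup $\mathcal{K}\equiv\{g\in\mathcal{G}_{AB}:\langle g_A,h_A\rangle=0\ \ \forall\,h\in\mathcal{G}_{AB}\}$ and vanishes otherwise; hence $Q^2=2^{m_{AB}}Q_0$ with $Q_0\equiv\sum_{g\in\mathcal{K}}\epsilon(g)\,g$. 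All $A$-parts of elements of $\mathcal{K}$ commute, so the identity above reduces to $\epsilon(gg')=\epsilon(g)\epsilon(g')$ on $\mathcal{K}$, and the same manipulation yields $Q_0^2=|\mathcal{K}|\,Q_0$, i.e.\ $Q_0/|\mathcal{K}|$ is an orthogonal projector. Parametrising a group element as $g=\prod_i(h^i_{AB})^{s_i}$ identifies $\mathcal{K}$ with $\ker J$ over $F_2$, so $|\mathcal{K}|=2^{\,m_{AB}-\Rk(J)}$, and $\Tr Q_0=2^{L_{AB}}$ since only $g=I$ has nonzero trace.

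The rest is an eigenvalue count. Being $|\mathcal{K}|$ times a projector of trace $2^{L_{AB}}$, $Q_0$ has eigenvalue $|\mathcal{K}|$ with multiplicity $2^{L_{AB}}/|\mathcal{K}|=2^{\,L_{AB}-m_{AB}+\Rk(J)}$ and is zero otherwise; so $Q^2=2^{m_{AB}}Q_0$ has eigenvalue $2^{\,2m_{AB}-\Rk(J)}$ with the same multiplicity, and zero otherwise. As $Q=2^{L_{AB}}\rho_{AB}^{T_A}$ is Hermitian, its nonzero eigenvalues are $\pm\,2^{\,m_{AB}-\Rk(J)/2}$ (a genuine power of two, because the alternating $F_2$-form $J$ has even rank), with total multiplicity $2^{\,L_{AB}-m_{AB}+\Rk(J)}$. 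Therefore
\[
\lVert\rho_{AB}^{T_A}\rVert_1=2^{-L_{AB}}\lVert Q\rVert_1=2^{-L_{AB}}\cdot 2^{\,L_{AB}-m_{AB}+\Rk(J)}\cdot 2^{\,m_{AB}-\Rk(J)/2}=2^{\Rk(J)/2},
\]
and $E_N^{A\vert B}(\rho_{AB})=\log_2\lVert\rho_{AB}^{T_A}\rVert_1=\tfrac{1}{2}\Rk(J)$.

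The step I expect to require the most care is the identity $\epsilon(hg)=\epsilon(h)\epsilon(g)(-1)^{\langle h_A,g_A\rangle}$: it follows from the fact that transposition reverses operator order together with $P^{T}=(-1)^{n_Y(P)}P$ for any Pauli string $P$, but the point to watch is that the sign involves the commutator of the $A$-\emph{restrictions} $h_A,g_A$ (the data tabulated by $J$), not of the mutually commuting full operators $h,g\in\mathcal{G}_{AB}$; one should also track the $\pm i$ phases appearing in Pauli products on the $A$ sites and the sign convention fixing each element of $\mathcal{G}_{AB}$ to be Hermitian. Once this is settled, the two character-sum evaluations and the eigenvalue count are routine, and, as it must be, the final value $\tfrac{1}{2}\Rk(J)$ is manifestly independent of the chosen generating set of $\mathcal{G}_{AB}$.
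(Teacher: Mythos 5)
Your proof is correct, and it takes a genuinely different route from the paper's. The paper proceeds by changing the generating set: it first shows $\Rk(J)$ is invariant under redefinitions of the generators, then brings the skew-symmetric form $J$ to its symplectic normal form over $F_2$ (pairs $a^i,b^i$ with anticommuting $A$-parts plus commuting $c^s$'s), writes $\rho_{AB}$ as the corresponding product of commuting factors, and reads off the spectrum of $\rho_{AB}^{T_A}$ factor by factor, getting $E_N^{A\vert B}=m_a=\tfrac12\Rk(J)$. You instead work basis-free: the sign rule $h^{T_A}=\epsilon(h)h$ with the cocycle identity $\epsilon(hg)=\epsilon(h)\epsilon(g)(-1)^{\langle h_A,g_A\rangle}$ (which is correct, and is the same sign mechanism that drives the paper's Lemma on the standard basis), followed by the character-sum evaluation of $Q^2=2^{m_{AB}}Q_0$, the identification of the radical $\mathcal{K}$ with $\ker J$, and a projector-trace eigenvalue count. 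All the steps check out: $h\mapsto\langle h_A,g_A\rangle$ is indeed an $F_2$-homomorphism, $\epsilon$ restricts to a character on $\mathcal{K}$ because $A$-parts of $\mathcal{K}$ commute, only $g=I$ contributes to $\Tr Q_0$ since $\mathcal{G}_{AB}$ contains no $-I$, and $\mathrm{rank}(Q)=\mathrm{rank}(Q^2)$ for Hermitian $Q$; even the evenness of $\Rk(J)$ is only a sanity check rather than a needed input, since the trace norm uses only $\vert\lambda\vert$. What each approach buys: the paper's normal-form construction makes the structure of the state explicit (so many "Bell-pair-like" $a,b$ pairs plus classical $c$'s), which is conceptually aligned with the multipartite stabilizer decomposition it cites and with the companion bound $2E_N^{A\vert B}\le I_{A,B}$; your argument avoids the normal-form lemma entirely, is manifestly independent of the choice of generators, and yields the magnitudes and multiplicities of the full spectrum of $\rho_{AB}^{T_A}$ in one stroke, which would also give the Renyi-type negativities with no extra work. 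The only points to keep explicit in a write-up are the ones you already flagged, plus two small ones: elements of $\mathcal{G}_{AB}$ are Hermitian Pauli strings with real signs (so the factorization $h_A\otimes h_B$ needs no $\pm i$ phases and $\epsilon$ is well defined), and $h^2=1$ is used when substituting $g=hh'$ in the convolution.
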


Theorem~\ref{Thm:NE_stabilizer} is useful in the calculation of the entanglement negativity of stabilizer states. See Sec.~\ref{Sec:Proof} below for the proof.

\begin{theorem}\label{Thm:NE_VS_Mutaul_stabilizer}
	For a stabilizer density matrix $\rho_{AB}$ in (\ref{eq:rho_AB}),
	\begin{equation}
	2 E_N^{A\vert B} (\rho_{AB})\le I_{A,B} (\rho_{AB}). \label{eq:bound}
	\end{equation}
\end{theorem}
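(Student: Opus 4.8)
The plan is to compute both sides of (\ref{eq:bound}) in terms of the binary-linear-algebra data of the stabilizer subgroup $\mathcal{G}_{AB}$, and then reduce the inequality to a purely combinatorial rank statement. By Theorem~\ref{Thm:NE_stabilizer} the left-hand side is $\Rk(J)$, where $J$ is the $m_{AB}\times m_{AB}$ skew-symmetric matrix over $F_2$ recording anticommutation of the $A$-parts $h^i_A$ of a chosen generating set of $\mathcal{G}_{AB}$. For the right-hand side, I would first express $I_{A,B}(\rho_{AB})$ for a stabilizer state using (\ref{eq:Renyi})–(\ref{eq:S_A_practical}): since $\rho_{AB}$ has flat spectrum with $S_{AB}=L_{AB}-m_{AB}$, and similarly $S_A=L_A-m_A$, $S_B=L_B-m_B$, one gets $I_{A,B}=m_A+m_B-m_{AB}$, where $m_A=\log_2|\mathcal{G}_A|$ counts generators of $\mathcal{G}_{AB}$ whose $B$-part is trivial (and symmetrically for $m_B$). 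So the claim becomes $2\Rk(J)\le m_A+m_B-m_{AB}$.

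The key structural step is to choose the generating set of $\mathcal{G}_{AB}$ adapted to this decomposition. I would pick generators so that the first $m_A$ of them lie in $\mathcal{G}_A$ (i.e.\ have $h^i_B=1$), the next $m_B$ lie in $\mathcal{G}_B$ (have $h^i_A=1$), and the remaining $m_{AB}-m_A-m_B$ are "genuinely bipartite." Generators in $\mathcal{G}_A$ commute with everything in $B$, and generators in $\mathcal{G}_B$ have trivial $A$-part hence commute with every $h^j_A$; so in this basis $J$ has a large zero block: only the $(m_{AB}-m_A-m_B)\times(m_{AB}-m_A-m_B)$ sub-block $J'$ coming from the genuinely-bipartite generators can be nonzero, giving $\Rk(J)=\Rk(J')\le m_{AB}-m_A-m_B$. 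But that bound goes the wrong way — it would give $2\Rk(J)\le 2(m_{AB}-m_A-m_B)$, which does not obviously imply $2\Rk(J)\le m_A+m_B-m_{AB}$. The actual content must therefore come from a sharper fact: a skew-symmetric $F_2$ matrix has \emph{even} rank, and more importantly the relevant $J'$ cannot be close to full rank. I would instead argue that $\mathrm{rank}(J')\le (m_{AB}-m_A-m_B) - (\text{size of the radical of }J')$, and identify the radical — the set of bipartite generators whose $A$-part commutes with all $A$-parts — with data that, together with $\mathcal{G}_A$, bounds things correctly. Concretely, the commutant structure forces the number of independent anticommuting pairs to be small relative to $m_A,m_B$; this is where I expect to invoke that the "doubled" count $m_A + m_B$ already absorbs $m_{AB}$ plus twice the symplectic rank.

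The cleanest route, and the one I would actually pursue, is via the \emph{purification / tripartite} picture: let $\vert\psi_{ABC}\rangle$ be the stabilizer state with $C$ traced out. There is a known identity for stabilizer states relating the negativity across $A\vert B$ to a symplectic rank, and $I_{A,B}$ for a pure $\vert\psi_{ABC}\rangle$ satisfies $I_{A,B} = S_A + S_B - S_{AB} = S_A+S_B-S_C$ with $S_C$ also of the flat form. One then shows $2\Rk(J)$ equals twice the number of "EPR-like" pairs shared between $A$ and $B$ in the canonical form of the stabilizer group, and each such pair contributes $2$ to $I_{A,B}$ (one bit to $S_A$ and one to $S_B$) while contributing nothing negative to $-S_{AB}$; any additional classical correlation only increases $I_{A,B}$. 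Making "canonical form" precise is the main obstacle: I would use the fact that any stabilizer group on $AB$ can be brought, by Clifford unitaries local to $A$ and to $B$, to a standard form consisting of stabilizers purely in $A$, purely in $B$, and a collection of Bell-pair stabilizers $\{X_aX_b, Z_aZ_b\}$ straddling $A$ and $B$ — the number of such Bell pairs is exactly $\Rk(J)$, and each contributes $+1$ to $S_A$, $+1$ to $S_B$, and $0$ to $S_{AB}$ (relative to the product part), giving $I_{A,B}\ge 2\Rk(J)$. Establishing this local-Clifford normal form (or citing it from the stabilizer-entanglement literature, e.g.\ \cite{2006JMP....47f2106B}) and checking the bookkeeping of the entropies is the crux; the rest is the substitution $S=L-m$ and Theorem~\ref{Thm:NE_stabilizer}.
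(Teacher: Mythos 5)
Your proposal contains the paper's proof almost verbatim, but you discard it because of two bookkeeping errors. First, the mutual information has the wrong sign in your reduction: with $S_A=L_A-m_A$, $S_B=L_B-m_B$, $S_{AB}=L_{AB}-m_{AB}$ and $L_A+L_B=L_{AB}$, one gets $I_{A,B}=m_{AB}-m_A-m_B$, not $m_A+m_B-m_{AB}$ (your expression would be negative for any correlated state; e.g.\ a single Bell pair has $m_{AB}=2$, $m_A=m_B=0$, $I_{A,B}=2$). Second, after correctly stating that the left-hand side of (\ref{eq:bound}) equals $\Rk(J)$ by Theorem~\ref{Thm:NE_stabilizer}, you silently replace it by $2\Rk(J)$. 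With both slips fixed, the inequality to prove is exactly $\Rk(J)\le m_{AB}-m_A-m_B$, which is precisely what your ``adapted generating set'' argument gives: in the basis with $m_A$ purely-$A$ generators, $m_B$ purely-$B$ generators, and $m_{AB}-m_A-m_B$ straddling ones, the rows of $J$ indexed by the first two groups vanish (for the purely-$A$ generators this is because the full operators commute and their $B$-part is $1_B$, forcing their $A$-parts to commute with every other $A$-part), so $\Rk(J)=\Rk(J')\le m_{AB}-m_A-m_B$. This is the paper's proof in full; the step you dismissed as ``going the wrong way'' already closes the argument, and no statement about the radical of $J'$ or evenness of its rank is needed.

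The fallback route you then sketch (local-Clifford normal form with Bell pairs, GHZ-type classical correlations, and local pieces, in the spirit of \cite{2006JMP....47f2106B}) can indeed be made to work, but as written it is both incomplete and miscounted: the normal-form theorem is exactly the part you leave unproven and call the crux, and the number of Bell pairs across the $A\vert B$ cut is $\tfrac{1}{2}\Rk(J)=E_N^{A\vert B}$ (this is the content of the standard-basis lemma behind Theorem~\ref{Thm:NE_stabilizer}), not $\Rk(J)$. Each such pair contributes $2$ to $I_{A,B}$, so that picture yields $I_{A,B}\ge \Rk(J)=2E_N^{A\vert B}$, which is the theorem; your literal claim $I_{A,B}\ge 2\Rk(J)$ would amount to $I_{A,B}\ge 4E_N^{A\vert B}$, which a single Bell pair ($E_N=1$, $I_{A,B}=2$) already violates.
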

\begin{remark}
The same bound does not generalize to arbitrary quantum states. A simple way to see this is to observe that we can find a pure state $\vert \varphi_{AB}\rangle$ on a 2-qubit system such that $S_{\frac{1}{2}}(\rho_A)\ge \lambda S(\rho_A)$ for any real number $\lambda$. Here $\rho_A$ is the reduced density matrix of $\vert \varphi_{AB}\rangle$.
\end{remark}

\begin{proof}
	We only need to show
	\begin{equation}
	\Rk (J) \le m_{AB} - m_A - m_B.
	\end{equation}
	This is because we can rewrite the left-hand side of (\ref{eq:bound}) using Theorem~\ref{Thm:NE_stabilizer} and rewrite the right-hand side using (\ref{eq:Renyi}).
	
	For the calculation of $\Rk(J)$, we have the freedom to choose the set of stabilizer generators of $\mathcal{G}_{AB}$. Let us choose the following set
	\begin{equation}
	\{ h_A^i \otimes 1_B \}_{i=1}^{m_A} \cup \{ 1_A \otimes h_B^j \}_{j=1}^{m_B} \cup \{ l_A^k \otimes r_B^k\}_{k=1}^{m_{AB}-m_A -m_B}, \label{eq:good_basis}
	\end{equation}
	where $\{h_A^i\}_{i=1}^{m_A}$ is the set of generators of $\mathcal{G}_A$ and $\{h_B^j\}_{j=1}^{m_B}$ is the set of generators of $\mathcal{G}_B$. We have factorized the stabilizer generators of $\mathcal{G}_{AB}$ into a product on $A$ and $B$.  
	
	In the basis (\ref{eq:good_basis}), the matrix $J$ takes a simple form
	\begin{equation}
	J= \left( \begin{array}{c}
	0
	\end{array} \right)^{\oplus m_a} \oplus \left( \begin{array}{c}
	0
	\end{array} \right)^{\oplus m_b} \oplus J', \label{eq:key}
	\end{equation}
	where $J'$ is a symmetric $(m_{AB}-m_A -m_B)\times (m_{AB}-m_A -m_B)$ matrix over field $F_2$, defined according to
	\begin{equation}
	J'_{kk'}= \left\{ \begin{array}{ll}
	1 & \{ l^k_{A}, l^{k'}_{A}\}=0,\\
	0 & \textrm{otherwise.}
	\end{array} \right.
	\end{equation}
	
	It is obvious from (\ref{eq:key}) that $\Rk(J)\le  m_{AB} - m_A - m_B$. This completes the proof.
\end{proof}

Below are two simple examples that illustrate the calculation.

\begin{exmp}\label{Example:NE}
	Let $L_A=L_B=1$ and the set of stabilizer generators of $\mathcal{G}_{AB}$ be $\{ XX, ZZ\}$. Then the matrix $J$ is $2\times 2$:
	\begin{equation}
	J=\left( \begin{array}{cc}
	0&1\\
	1&0
	\end{array} \right) \quad \Rightarrow\quad \Rk(J)=2.
	\end{equation}
	Therefore, according to Theorem~\ref{Thm:NE_stabilizer}, 	$E_N^{A\vert B} (\rho_{AB})=1$. This result makes sense because the stabilizer state for this case is a Bell state
	\begin{equation}
	\vert \psi_{AB}\rangle = \frac{1}{\sqrt{2}} (\vert 00\rangle + \vert 11\rangle),
	\end{equation}
	written in the $Z$-basis.
\end{exmp}

\begin{exmp}\label{Example:NE_2}
	Let $L_A=L_B=L_C=1$ and the set of stabilizer generators of $\mathcal{G}$ be $\{ ZZI, IZZ, XXX \}$. Then the subgroup $\mathcal{G}_{AB}$ has a unique generator $ZZ$, and therefore the matrix $J$ is $1\times 1$:
	\begin{equation}
	J=\left( \begin{array}{c}
	0
	\end{array} \right) \quad \Rightarrow\quad \Rk(J)=0.
	\end{equation}
	Therefore, according to Theorem~\ref{Thm:NE_stabilizer}, 	$E_N^{A\vert B} (\rho_{AB})=0$. This result makes sense because the stabilizer state for this case is the GHZ state
	\begin{equation}
	\vert \psi_{ABC}\rangle = \frac{1}{\sqrt{2}} (\vert 000\rangle + \vert 111\rangle),
	\end{equation}
	written in the $Z$-basis.
\end{exmp}

\subsection{The proof of Theorem~\ref{Thm:NE_stabilizer}} \label{Sec:Proof}
The following three lemmas directly lead to the proof of Theorem~\ref{Thm:NE_stabilizer}.

\begin{lemma}
	$\Rk(J)$ is invariant under the change of the generators of $\mathcal{G}_{AB}$.
\end{lemma}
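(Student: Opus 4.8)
The statement to prove is that $\Rk(J)$, the $F_2$-rank of the matrix $J$ defined in \eqref{eq:def_J}, does not depend on the particular choice of stabilizer generators $\{h_{AB}^i\}_{i=1}^{m_{AB}}$ of the subgroup $\mathcal{G}_{AB}$.

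\textbf{Approach.} The plan is to observe that any two sets of stabilizer generators of the same Abelian group $\mathcal{G}_{AB}$ are related by an invertible $F_2$-linear change of basis, and then to show that $J$ transforms by congruence under this change of basis, so that its rank is preserved. Concretely, write each generator $h_{AB}^i = h_A^i \otimes h_B^i$ and encode $h_A^i$ by its binary symplectic vector $v_i \in F_2^{2L_A}$ (the $(X|Z)$ representation restricted to region $A$). Up to sign, multiplication of generators corresponds to addition of these vectors mod $2$, and the anticommutator entry is governed by the standard symplectic form: $\{h_A^i, h_A^j\} = 0$ iff $\omega(v_i, v_j) = 1$, where $\omega$ is the symplectic inner product over $F_2$. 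Thus $J_{ij} = \omega(v_i, v_j)$ for $i \ne j$, and $J_{ii} = 0$ automatically since $\omega$ is alternating; so $J = V^T \Omega V$ where $V$ is the $2L_A \times m_{AB}$ matrix with columns $v_i$ and $\Omega$ is the symplectic form matrix.

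\textbf{Key steps, in order.} First I would recall that if $\{h_{AB}^i\}$ and $\{\tilde h_{AB}^i\}$ are two generating sets of $\mathcal{G}_{AB}$, there is an invertible matrix $S \in GL_{m_{AB}}(F_2)$ with $\tilde h_{AB}^i = \prod_j (h_{AB}^j)^{S_{ji}}$ (up to phases, which do not affect commutation relations). Second, I would note that the $A$-restriction respects this: $\tilde h_A^i = \prod_j (h_A^j)^{S_{ji}}$ up to a phase, hence the symplectic vectors satisfy $\tilde V = V S$. Third, since the anticommutation data is bilinear in the symplectic vectors, $\tilde J = \tilde V^T \Omega \tilde V = S^T (V^T \Omega V) S = S^T J S$. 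Fourth, because $S$ is invertible over $F_2$, congruence preserves rank: $\Rk(\tilde J) = \Rk(S^T J S) = \Rk(J)$. This closes the argument. I would include a brief remark that the phases appearing in $\tilde h_A^i$ are irrelevant because $J$ only records whether two operators commute or anticommute, a property insensitive to overall phases.

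\textbf{Main obstacle.} The only subtlety — and the step I would be most careful with — is the bookkeeping that the restriction-to-$A$ map sends a product of generators to the product of their $A$-parts \emph{up to a scalar phase}, and that this phase is harmless. One must check that $\{h_A, h_B\}=0$ (as a condition on operators supported on $A$) depends only on the $F_2$ symplectic vectors and not on phases or on the $B$-parts; this is immediate from the Pauli commutation rules, since two Pauli strings commute or anticommute according to the parity of the number of sites where their single-site factors anticommute, which is exactly $\omega(v_i,v_j)$. A secondary point worth stating explicitly is that $J$ is well-defined in the first place only after one fixes \emph{some} generating set, so the lemma is precisely what makes Theorem~\ref{Thm:NE_stabilizer} meaningful; no genuine difficulty lies here, only care in phrasing.
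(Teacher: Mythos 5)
Your proof is correct and rests on the same mechanism as the paper's: a change of generators of $\mathcal{G}_{AB}$ amounts to an invertible transformation $S$ over $F_2$, under which $J$ transforms by congruence, $\tilde{J}=S^{T} J S$, so its $F_2$-rank is unchanged. The paper reaches the same conclusion by decomposing the change of generators into elementary moves (swapping two generators, multiplying one generator into another) and observing that each induces a simultaneous row-and-column operation on $J$ --- i.e.\ congruence by an elementary matrix --- so your symplectic bookkeeping $J=V^{T}\Omega V$ simply packages the paper's step-by-step argument into a single global statement.
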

\begin{proof}
	Any change of generators can be done by a sequence of (1) permutations of generators (2) multiplying one generator to another. These operations induce an operation on the matrix $J$:
	\begin{enumerate}
		\item Switch two rows $a \leftrightarrow b$ and then switch two columns $a \leftrightarrow b$. (Here, $a\ne b$.)
		\item Add the $a$-th row to the $b$-th row and then add the $a$-th column to the $b$-th column over the field $F_2$. (Here, $a\ne b$.)
	\end{enumerate}
	It is easy to see that these two operations do not change $\Rk(J)$. This completes the proof.
\end{proof}

\begin{lemma}
	It is possible to choose a ``standard basis" of stabilizer generators of $\mathcal{G}_{AB}$:
	\begin{equation}
	\cup_{i=1}^{m_a}\{ a^i_{AB}, b^i_{AB}\}\cup_{s=1}^{m_c} \{ c^s_{AB}\}, \label{eq:standard_basis}
	\end{equation}	
	such that 
	\begin{enumerate}
		\item The nonnegative integers $m_a$ and $m_c$ satisfy
		\begin{equation}
		2 m_a + m_c = m_{AB}.
		\end{equation}
		\item When we write
		\begin{equation}
		\begin{aligned}
		a^i_{AB} &= a^i_A\otimes a^i_B ,\\
		b^j_{AB} &= b^j_A\otimes b^j_B, \\
		c^s_{AB} &= c^s_A\otimes c^s_B,
		\end{aligned}
		\end{equation}
		we have
		\begin{equation}\begin{aligned}
		\,	\{ a_A^i, b_A^i \} = 0, & \quad \forall i,\\
		\, [a_A^i, b_A^j] = 0, & \quad i\ne j,	\\
		\,	[a_A^i, a_A^{j}]= [b_A^i, b_A^{j}] = 0, & \quad \forall i,j,\\
		\,	[a_A^{i},c_A^s]=[b_A^{i},c_A^s] =0, & \quad \forall s,i,\\
		\, [c_A^s, c_A^t]=0, & \quad \forall s,t.
		\end{aligned}
		\end{equation}
		Equivalently, in the standard basis (\ref{eq:standard_basis}), we have a block-diagonal $J$:
		\begin{equation}
		J=\left( \begin{array}{cc}
		0&1\\
		1&0
		\end{array} \right)^{\oplus m_a} \oplus \left( \begin{array}{c}
		0
		\end{array} \right)^{\oplus m_c}.  \label{eq:starndard}
		\end{equation}
	\end{enumerate}
\end{lemma}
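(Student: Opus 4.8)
The plan is to recognize this statement as the normal-form theorem for alternating bilinear forms over $\mathbb{F}_2$ — equivalently, symplectic Gram--Schmidt in characteristic two — applied to the matrix $J$ of~(\ref{eq:def_J}). By the previous lemma, every change of the generating set of $\mathcal{G}_{AB}$ acts on $J$ by simultaneous identical row and column operations, hence preserves the rank, and it also preserves the alternating property $J_{ii}=0$ (which holds because each $h^i_A$ commutes with itself). Therefore it suffices to produce a sequence of such generator changes bringing $J$ to the block form~(\ref{eq:starndard}); the list of commutation relations in item~2 is then merely a reading-off of that block structure, and the identity $2m_a+m_c=m_{AB}$ is a dimension count.

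The reduction proceeds by induction on $m_{AB}$. If $J=0$, i.e.\ the $h^i_A$ pairwise commute, we are done with $m_a=0$ and $m_c=m_{AB}$. Otherwise pick $i\ne j$ with $\{h^i_A,h^j_A\}=0$ and relabel these two generators as $a^1_{AB}$ and $b^1_{AB}$. For each of the remaining $m_{AB}-2$ generators $g_{AB}=g_A\otimes g_B$, I replace $g_{AB}$ by its product with $a^1_{AB}$ whenever $g_A$ anticommutes with $b^1_A$, and with $b^1_{AB}$ whenever $g_A$ anticommutes with $a^1_A$. Because $a^1_A$ and $b^1_A$ anticommute, a short computation over $\mathbb{F}_2$ shows that after this replacement $g_A$ commutes with both $a^1_A$ and $b^1_A$, while the anticommutation of $a^1_A$ with $b^1_A$ is untouched; moreover these replacements are invertible on the generating set, so the congruence-invariance lemma above applies. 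In matrix language $J$ now has the form $\left(\begin{smallmatrix}0&1\\1&0\end{smallmatrix}\right)\oplus J'$ with $J'$ an alternating $(m_{AB}-2)\times(m_{AB}-2)$ matrix over $\mathbb{F}_2$ whose underlying generators already factorize on $A$ and $B$. Applying the induction hypothesis to $J'$ yields the standard basis~(\ref{eq:standard_basis}) and the block form~(\ref{eq:starndard}), with $m_a$ equal to one plus the value for $J'$ and $m_c$ unchanged.

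The main point to be careful about is the characteristic-two bookkeeping in this ``symplectic orthogonalization'' step: one must check that multiplying $g_{AB}$ by $a^1_{AB}$ and/or $b^1_{AB}$ according to the two anticommutation bits kills both bits \emph{simultaneously} — the cross term contributes a factor equal to the anticommutation bit of $a^1_A$ with $b^1_A$, namely $1$, times the other bit, which is exactly what cancels the original bit mod $2$ — and that the full sweep disturbs neither the $a^1_A$--$b^1_A$ pairing nor the blocks already cleared at earlier induction steps. One should also record along the way that the replaced operators remain an independent generating set of $\mathcal{G}_{AB}$ and still factorize as $h_A\otimes h_B$, so that the hypotheses needed to iterate, and the conclusion about commutators of the $A$-parts in item~2, are maintained throughout. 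Everything else — the trivial base case, the count $2m_a+m_c=m_{AB}$, and the equivalence between the commutator list and~(\ref{eq:starndard}) — is routine.
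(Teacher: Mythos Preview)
Your proposal is correct and takes essentially the same approach as the paper: both reduce the lemma to the normal-form theorem for alternating bilinear forms over $\mathbb{F}_2$, observing that the congruence (simultaneous row/column) operations on $J$ are exactly the redefinitions of stabilizer generators. The only difference is presentational: the paper simply invokes ``the standard method'' for skew-symmetric matrices, whereas you spell out the symplectic Gram--Schmidt sweep explicitly and verify the characteristic-two cancellation by hand.
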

\begin{proof}
	Note that the matrix $J$, defined in (\ref{eq:def_J}) can also be treated as a skew-symmetric matrix over $F_2$. (On $F_2$, we have $1+1=0$.) We can apply the standard method to bring it to the standard form (\ref{eq:starndard}). The procedure is to applying a sequence of pairs of row and column operations, where the column operation is similar to the row operation. This sequence of operations is suitable for our purpose because it corresponds to a sequence of redefinition of stabilizer generators. This completes the proof.
\end{proof}

\begin{lemma}
	In terms of the number $m_a$ defined above,
	\begin{equation}
	E_N^{A\vert B}(\rho_{AB}) = m_a. \label{eq:NE_standard}
	\end{equation}
\end{lemma}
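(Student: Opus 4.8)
The plan is to show that, up to local Clifford unitaries acting on $A$ and on $B$ separately, $\rho_{AB}$ factorizes as a tensor product of $m_a$ Bell pairs --- each shared between one qubit of $A$ and one matching qubit of $B$ --- and a residual stabilizer state that is separable across the $A\vert B$ cut. Since $E_N^{A\vert B}$ is invariant under local unitaries and additive under tensor products that respect the cut, and since $E_N^{A\vert B}$ of a Bell pair equals $1$ while it vanishes on any separable state, this yields $E_N^{A\vert B}(\rho_{AB})=m_a$ immediately.

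Starting from the standard basis (\ref{eq:standard_basis}) and using (\ref{eq:rho_AB}), I would write
\begin{equation}
\rho_{AB}=\frac{1}{2^{L_{AB}}}\prod_{i=1}^{m_a}(1+a^i_{AB})(1+b^i_{AB})\prod_{s=1}^{m_c}(1+c^s_{AB}).
\end{equation}
By the commutation relations of the preceding lemma, the $A$-parts $\{a^i_A,b^i_A\}_{i=1}^{m_a}$ form $m_a$ mutually commuting hyperbolic pairs of Pauli operators on $A$ (each $a^i_A$ anticommutes only with $b^i_A$); in particular they are independent and $\{a^i_A\}_{i=1}^{m_a}$ is isotropic, so $m_a\le L_A$. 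Symplectic Gram--Schmidt then furnishes a Clifford $U_A$ on $A$ with $U_A a^i_A U_A^\dagger=X_i$ and $U_A b^i_A U_A^\dagger=Z_i$ on qubits $1,\dots,m_a$ of $A$ ($X_j,Z_j$ being single-qubit Paulis on site $j$). Because $\mathcal{G}_{AB}$ is abelian and $a^i_A,b^i_A$ anticommute, the $B$-parts $a^i_B,b^i_B$ inherit the same commutation pattern, so a matching Clifford $U_B$ on $B$ exists. Finally, each $c^s_A$ commutes with every $a^i_A$ and $b^i_A$, hence $U_A c^s_A U_A^\dagger$ is trivial on qubits $1,\dots,m_a$ of $A$, and likewise $U_B c^s_B U_B^\dagger$ on $B$; the $c^s$ generators are therefore confined to the complementary qubits.

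Conjugating by $U_A\otimes U_B$ sends $a^i_{AB}\mapsto X_iX_i'$ and $b^i_{AB}\mapsto Z_iZ_i'$, with primes denoting the partner qubit in $B$, so qubits $i$ of $A$ and $B$ carry $\tfrac14(1+X_iX_i')(1+Z_iZ_i')=\vert\Phi^+\rangle\langle\Phi^+\vert_i$, a Bell pair; the remaining factor is a stabilizer state $\rho^{\rm cl}$ on the complementary qubits whose generators have pairwise commuting $A$-parts, so $(U_A\otimes U_B)\,\rho_{AB}\,(U_A\otimes U_B)^\dagger=(\bigotimes_{i=1}^{m_a}\vert\Phi^+\rangle\langle\Phi^+\vert_i)\otimes\rho^{\rm cl}$. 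An abelian Pauli subgroup of $A$ can be conjugated into products of $Z$ operators by a further Clifford on $A$, after which every stabilizer generator of $\rho^{\rm cl}$ is diagonal in the computational basis of its $A$-qubits; hence $\rho^{\rm cl}=\sum_{\vec x}\vert\vec x\rangle\langle\vec x\vert_A\otimes M^{\vec x}_B$ with $M^{\vec x}_B\ge 0$ is separable and $E_N^{A\vert B}(\rho^{\rm cl})=0$. Collecting the pieces, $E_N^{A\vert B}(\rho_{AB})=\sum_{i=1}^{m_a}E_N^{A\vert B}(\vert\Phi^+\rangle\langle\Phi^+\vert)+E_N^{A\vert B}(\rho^{\rm cl})=m_a$, which is (\ref{eq:NE_standard}).

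The step that deserves the most care is this simultaneous Clifford normalization of the $A$-side (and $B$-side) Paulis, together with the bookkeeping that the $c^s$ generators land on qubits disjoint from the $2m_a$ Bell-pair qubits: the inequalities $m_a\le L_A$ and $m_a\le L_B$ are exactly what validate the symplectic Gram--Schmidt extension, and one must separately verify that the residual ``classical'' block is genuinely separable rather than merely of small negativity.
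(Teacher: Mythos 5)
Your proof is correct, but it takes a genuinely different route from the paper. The paper works directly with the partial transpose: using the rule $[(\lambda_A\otimes\lambda_B)(\mu_A\otimes\mu_B)]^{T_A}=(\mu_A^{T_A}\lambda_A^{T_A})\otimes(\lambda_B\mu_B)$, it transposes the stabilizer expansion of $\rho_{AB}$ written in the standard basis, observes that only the cross term in each hyperbolic block flips sign, $(1+a^i+b^i+a^ib^i)\mapsto(1+\tilde a^i+\tilde b^i-\tilde a^i\tilde b^i)$, and then reads off the (flat, signed) spectrum of $\rho_{AB}^{T_A}$ from the fact that the transposed generators independently take values $\pm1$, giving $\lVert\rho_{AB}^{T_A}\rVert_1=2^{m_a}$. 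You instead prove a structural normal form: local Cliffords $U_A\otimes U_B$ (via symplectic Gram--Schmidt on the hyperbolic pairs, with $m_a\le L_A,L_B$) bring $\rho_{AB}$ to $m_a$ Bell pairs across the cut tensored with a residual stabilizer state whose $A$-parts commute and which is therefore classical--quantum, hence separable; then LU invariance and additivity of $E_N$ finish the job. Your argument is sound --- the inheritance of the commutation pattern by the $B$-parts, the triviality of the $c^s$ generators on the Bell-pair qubits, and the positivity of the blocks $M^{\vec x}_B=(\langle\vec x\vert\otimes I)\rho^{\mathrm{cl}}(\vert\vec x\rangle\otimes I)$ all check out, and the only cosmetic caveat is that stabilizer signs may yield other Bell states than $\vert\Phi^+\rangle$, which does not affect $E_N$. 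The trade-off: the paper's computation is more self-contained (no appeal to Clifford normal-form theorems, LU invariance, or additivity) and yields the full spectrum of $\rho_{AB}^{T_A}$; your decomposition is more informative, exhibiting the entanglement content explicitly as $m_a$ Bell pairs plus purely classical correlations (the viewpoint of the multipartite stabilizer normal form the appendix alludes to), from which Theorem~\ref{Thm:NE_VS_Mutaul_stabilizer} and statements about distillable entanglement follow immediately as well.
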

\begin{proof}
	The proof is based on a few simple observations:
	\begin{enumerate}
		\item Let $\{ h_{AB}^i \}_{i=1}^{m_{AB}}$ be the set of generators of stabilizer group $\mathcal{G}_{AB}$. Let 
		\begin{eqnarray}
		\tilde{h}_{AB}^i \equiv (h_{AB}^i)^{T_A}.
		\end{eqnarray} 
		Then $\{ \tilde{h}_{AB}^i \}_{i=1}^{m_{AB}}$ generates another stabilizer group, which we may denote as $\tilde{\mathcal{G}}_{AB}$.
		\item A simple property of partial transpose:
		\begin{equation}
		[(\lambda_A \otimes \lambda_B) \cdot (\mu_A\otimes \mu_B)]^{T_A} = (\mu_A^{T_A} \cdot \lambda_A^{T_A}) \otimes (\lambda_B \cdot \mu_B).
		\end{equation}
		\item We must have 
		\begin{equation}
		(h^i_{AB} h^j_{AB})^{T_A} = \pm \tilde{h}^i_{AB} \tilde{h}^j_{AB}
		\end{equation}
		because that each stabilizer generator is a tensor product of factors acting on $A$ and $B$.
		We obtain ``$+$" if $h^i_{A}$ and $h^j_A$ commute and we obtains ``$-$" if $h^i_{A}$ and $h^j_A$ anti-commute.
	\end{enumerate} 
	With these simple observations, we find that the density matrix $\rho_{AB}$, written the standard basis (\ref{eq:standard_basis}) as
	\begin{equation}
	\rho_{AB}= \frac{1}{2^{L_{AB}}} \prod_{i=1}^{m_a} (1 + a^i + b^i + a^ib^i)  \prod_{s=1}^{m_c} (1 + c^s) ,
	\end{equation}
	is of the following form after the partial transpose: 
	\begin{equation}
	\rho_{AB}^{T_A}= \frac{1}{2^{L_{AB}}} \prod_{i=1}^{m_a} (1 + \tilde{a}^i + \tilde{b}^i - \tilde{a}^i \tilde{b}^i)  \prod_{s=1}^{m_c} (1 + \tilde{c}^s) .
	\end{equation}
	Because different stabilizer generators $\{ \tilde{a}^i, \tilde{b}^j, \tilde{c}^s \}$ can independently take $\pm 1$, one can easily verify (\ref{eq:NE_standard}). This completes the proof.
\end{proof}

\section{Data collapse details}\label{appendix:data_collaps_details}

	Near the critical point $(p\approx p_c)$, we expect the physical quantities of interest, e.g., entanglement entropy, mutual information and entanglement negativity to scale as a function of $(p-p_c)L^{1/\nu}$. Here the dimensionless number $\nu$ as related to the correlation length by
	\begin{equation}
	\xi \sim \vert p-p_c \vert^{-\nu}.
	\end{equation}  
	The function can be different for different quantities. 
	
	To calculate $p_c$ and $\nu$, we perform data collapse for the mutual information $I_{A,B}$ and the entanglement negativity $E_N^{A\vert B}$ in the random Clifford circuit model with measurements:
	\begin{eqnarray}
	I_{A,B} = f((p-p_c)L^{1/\nu})\\
	E_N^{A\vert B}= \widetilde{f} ((p-p_c)L^{1/\nu}),
	\end{eqnarray}
	where we have fixed the two intervals to be antipodal regions with length $\vert A \vert = \vert B \vert = L/8$ whereas $f(\cdot)$ and $\widetilde{f}(\cdot)$ are functions. 
	
	We further perform data collapse for random Haar circuit model with measurements. This time, we consider the difference of half-chain entanglement entropy:
	\begin{equation}
	S_1(p)-S_1(p_c) = g((p-p_c)L^{1/\nu}),
	\end{equation}
	where $g(\cdot)$ is a function.

    By following the protocol in Ref.~\cite{datacollapse}, we find that $p_c = 0.16 (0.26), \nu=1.07(1.35)$ in the Clifford (Haar) case. The results for $p_c$ in both models are consistent with Ref.~\cite{2019PhRvB.100m4306L} and \cite{2019PhRvX...9c1009S}. For the critical exponent $\nu$, our result in the Clifford circuit model is close to that of Ref.~\cite{2019PhRvB.100m4306L}, but our results of Haar random circuit model differs with $\nu=2.01$ in Ref.~\cite{2019PhRvX...9c1009S}. This is understandable, as in the random Haar case the finite size effect is more significant.

    	\begin{figure}[h]
    	\includegraphics[width=0.7\columnwidth,valign=center]{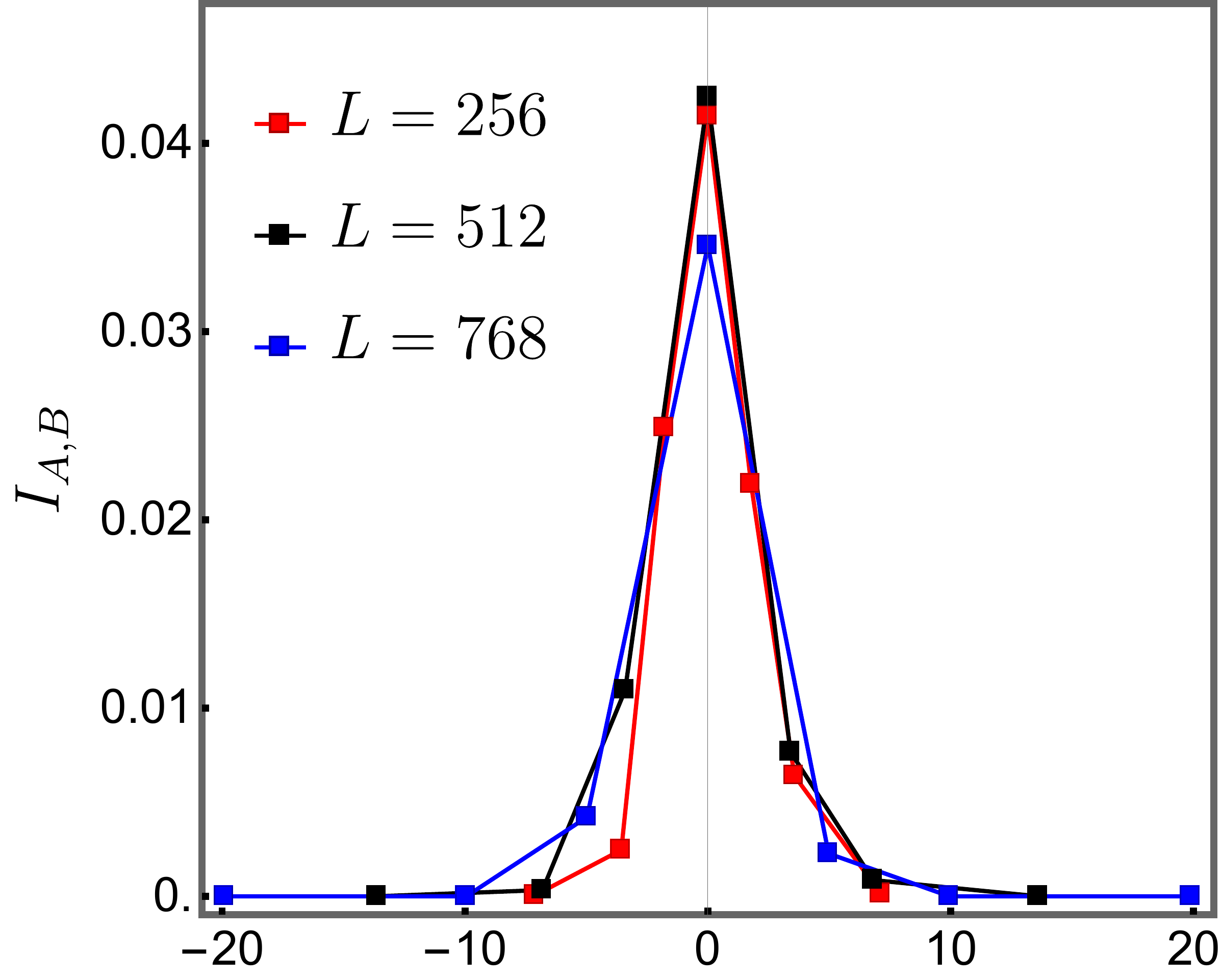}
    	\includegraphics[width=0.72\columnwidth,valign=center]{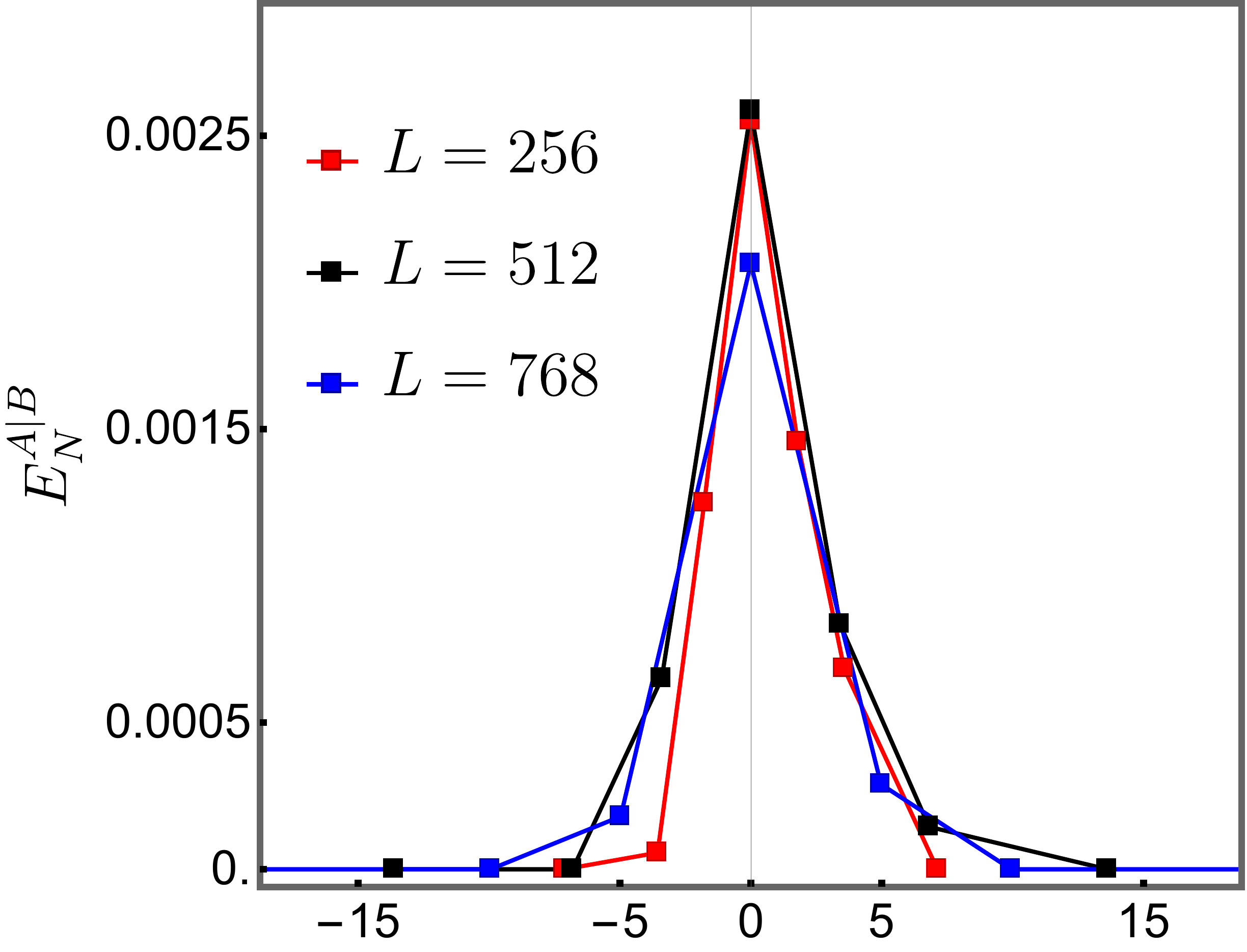}
    	\includegraphics[width=0.7\columnwidth,valign=center]{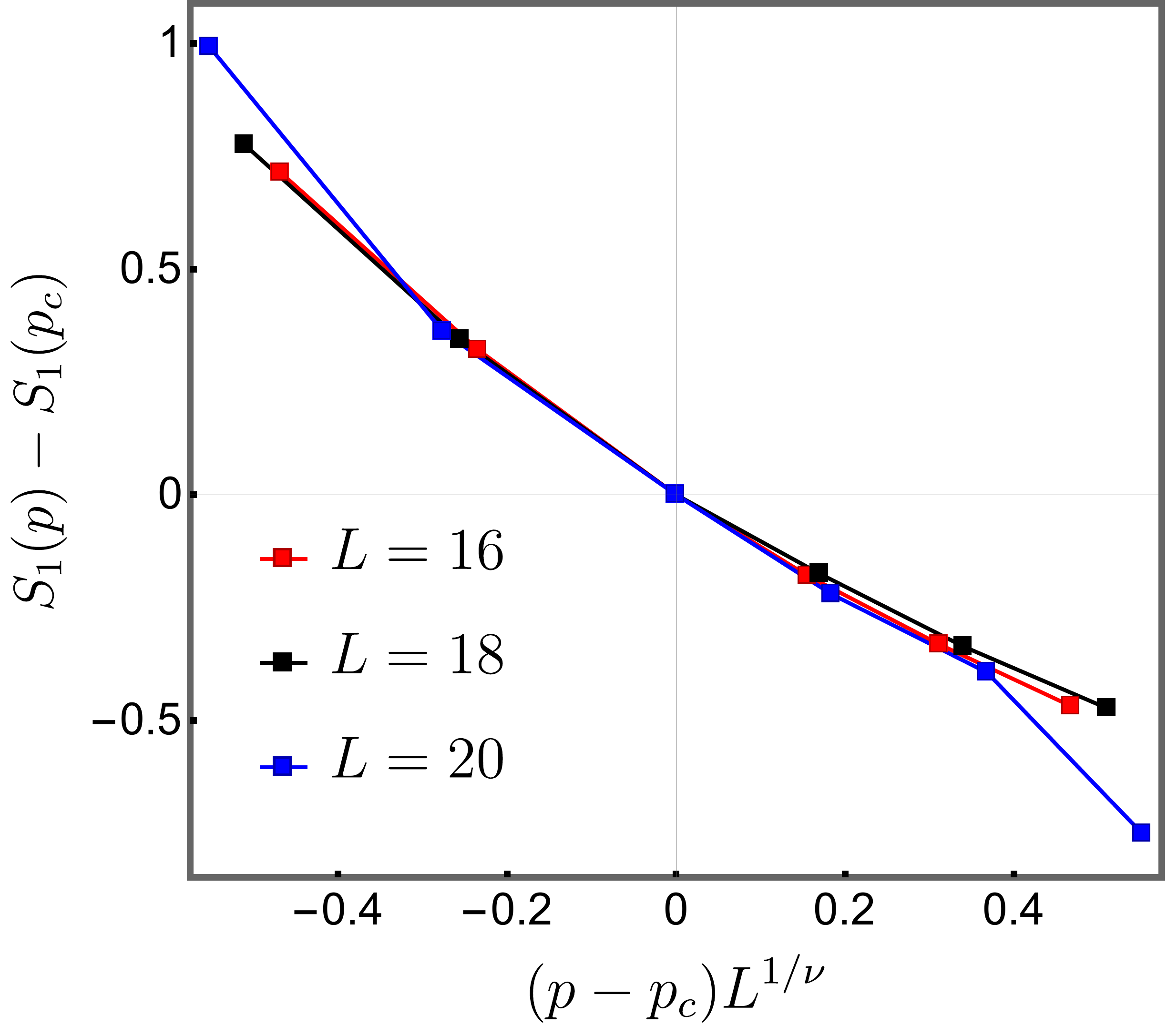}
    	\caption{(Top and middle) Data collapse for the mutual information $I_{A,B}$ and logarithmic negativity $E_N^{A\vert B}$ at the critical point $p_c=0.16$ for the hybrid Clifford circuit model.  Here we fix the two intervals to be antipodal regions with length $\vert A \vert = \vert B \vert = L/8$. Thus, the cross-ratio is fixed at $\eta= \sin^2(\frac{\pi}{8})\approx 0.146$.
    		(Bottom) Data collapse for the half-chain entanglement entropy $S_1(p)-S_1(p_c)$ for the random Haar model.}
    	\label{fig_data_collaps}
    \end{figure}
    \FloatBarrier

	\bibliography{ref}
	\bibliographystyle{apsrev}
\end{document}